\RequirePackage{rotating}
\documentclass[12pt,a4paper]{amsart}
\usepackage{amssymb}
\usepackage{tikz-cd}
\usepackage{chngcntr, csquotes}
\usepackage[margin=1in]{geometry}

\usepackage{accents}
\usepackage{todonotes}
\usepackage[shortlabels]{enumitem}

\usepackage{hyperref}

\hypersetup{
  colorlinks=true,breaklinks,
  linktoc=section,
  linkcolor=blue,
  citecolor=blue,
  urlcolor=blue,
}

\usepackage{multirow}

\usepackage{lineno}

\usepackage{float,rotating}
\setuptodonotes{inline}

\usepackage{array}
\newcolumntype{L}[1]{>{\raggedright\let\newline\\\arraybackslash\hspace{0pt}}m{#1}}
\newcolumntype{C}[1]{>{\centering\let\newline\\\arraybackslash\hspace{0pt}}m{#1}}
\newcolumntype{R}[1]{>{\raggedleft\let\newline\\\arraybackslash\hspace{0pt}}m{#1}}

\newcolumntype{P}[1]{>{\centering\arraybackslash}p{#1}}
\newcolumntype{Q}[1]{>{\raggedleft\arraybackslash}p{#1}}

\usepackage{extdash}

\newcounter{proofcount}
\AtBeginEnvironment{proof}{\stepcounter{proofcount}}
\newtheorem{claim}{Claim}
\makeatletter                  
\@addtoreset{claim}{proofcount}
\makeatother                   

\theoremstyle{remark}
\makeatletter
\newtheorem*{cproof/}{Proof of Claim \rev@cproofmark}

\newenvironment{cproof}[1][\@nil]
  {\def\@tmp{#1}%
   \ifx\@tmp\@nnil
       \def\rev@cproofmark{\theclaim}
    \else
       \let\rev@cproofmark\@tmp
    \fi
   \pushQED{\qed}\begin{cproof/}}
  {\popQED\end{cproof/}}
\makeatother

\usepackage{cleveref}

\theoremstyle{theorem}
\newtheorem{thm}{Theorem}[section]
\newtheorem{lem}[thm]{Lemma}

\newtheorem{dfn}[thm]{Definition}

\theoremstyle{definition}
\newtheorem{question}[thm]{Question}

\newtheorem*{rem}{Remark}

\crefname{thm}{theorem}{theorems}
\Crefname{thm}{Theorem}{Theorems}
\crefname{lem}{lemma}{lemmas}
\Crefname{lem}{Lemma}{Lemmas}
\crefname{dfn}{definition}{definitions}
\Crefname{dfn}{Definition}{Definitions}
\crefname{claim}{claim}{claims}
\Crefname{claim}{Claim}{Claims}

\numberwithin{equation}{section}

\newcommand{\e}[1]{\overline{#1}}
\newcommand{\sstar}{\Sigma^*}
\newcommand{\sep}{\mathord\circ}
\newcommand{\spp}{\mathord\bullet}
\newcommand{\stpp}{\mathord\star}
\newcommand{\fin}{\mathord\wr}
\newcommand{\gapp}{\, \sep\spp \,}
\newcommand{\stopp}{\, \sep\stpp}

\newcommand{\upgap}{\, | \,}

\newcommand{\rowgap}[1]{\multicolumn{#1}{c}{\vspace{-10pt}}}

\newcommand{\sub}{\mathsf{sub}}

\newcommand{\NP}{\mathbf{NP}}
\renewcommand{\a}{\alpha}
\renewcommand{\b}{\beta}
\newcommand{\ra}{\rightarrow}
\newcommand{\Ra}{\Rightarrow}
\newcommand{\emptyword}{\varepsilon}

\newcommand{\set}[2]{\{ \, #1 \colon #2 \,\}}

\newcommand{\define}[1]{{\normalfont{\textbf{#1}}}}

\begin{document}

\title[Non-Terminal Complexity of Simple Semi-Conditional Grammars]{Non-Terminal Complexity\\ of Simple Semi-Conditional Grammars}

\author[H.~Fernau]{Henning Fernau}
\address{FB IV -- Informatikwissenschaften Theoretische Informatik Campus II / Geozentrum Geb\"{a}ude H, Universit\"{a}t Trier 54286 Trier, Germany}
\email{fernau@uni-trier.de}

\author[S.~Jain]{Sanjay Jain}
\address{School of Computing, National University of Singapore, 13 Computing Drive, Singapore 117417, Republic of Singapore}
\email{sanjayjain@nus.edu.sg}

\author[L.~Richter]{Linus Richter}
\address{Flinders University, College of Science and Engineering, Level 3, Tonsley Building 1, 1284 South Road, Tonsley SA 5042, Australia}
\email{linus.richter@flinders.edu.au}
\urladdr{https://linus-richter.github.io}

\author[F.~Stephan]{Frank Stephan}
\address{School of Computing, National University of Singapore, 13 Computing Drive, Singapore 117417 and Department of Mathematics, National University of Singapore, 10 Lower Kent Ridge Road, Singapore 119076}
\email{fstephan@nus.edu.sg}
\urladdr{https://www.comp.nus.edu.sg/~fstephan/}

\author[D.~Turetsky]{Dan Turetsky}
\address{School of Mathematics and Statistics Victoria University of Wellington Cotton Building, Gate 7, Kelburn Parade Wellington, New Zealand}
\email{dan.turetsky@vuw.ac.nz}

\date{\today}

\thanks{Sanjay Jain was supported by NUS Provost chair grant E-252-00-0021-09. Linus Richter was fully supported by Singapore Ministry of Education grant MOE-000538-01. Additionally, Sanjay Jain and Frank Stephan were partially supported by Singapore Ministry of Education grant MOE-000538-01. Dan Turetsky was supported by the Royal Society of New Zealand.}


\begin{abstract}
	We study the complexity of simple semi-conditional grammars (SSCGs) in terms of the number of their terminals and non-terminals. We show that SSCGs with three non-terminals can generate all RE languages; two non-terminals suffice for linear languages; and one suffices for unary regular languages. We also determine both upper bounds and fundamental limitations of SSCGs: while certain one-non-terminal SSCGs can generate non-context-free languages, every language generated by a one-non-terminal SSCG is in non-deterministic linear space. We also prove that there exists a regular language of alphabet size~$\ge 3$ which cannot be generated by any one-non-terminal SSCG. Finally, we prove that membership testing for SSCGs of two non-terminals is already $\NP$-hard.
\end{abstract}

\maketitle

\section{Introduction}\label{sec:intro}

\subsection{Background}
We study the descriptive complexity of certain regulated rewriting grammar formalisms which have been previously proven to be computationally complete. More precisely, we investigate the \emph{non-terminal complexity} of such grammars. Our motivating question is:

\begin{quote}
What is the smallest number of non-terminals sufficient to describe each recursively enumerable (RE) language?
\end{quote}

For the classical rewriting mechanisms---such as graph-controlled, programmed, or matrix grammars (all with appearance checking)---this question was studied in a sequence of papers, finally bringing the number of required non-terminals down to 2, 3, and~3, respectively~\cite{Feretal07}. There, it was also shown that one non-terminal is insufficient to produce all RE languages, but sufficient to obtain all linear languages with graph control. Conversely, it was also remarked at the end of~\cite{Feretal07} that Example~4.1.1 in~\cite{DasPau89} shows that this dramatic collapse of the non-terminal hierarchy does not occur with random context grammars with appearance checking; in contrast, for these grammars, the family of languages generable with~$k$ non-terminals is a proper subclass of the family of languages generable with~$k+1$ non-terminals. Hence, it is interesting to study related grammatical mechanisms. This is why we study so-called \emph{simple semi-conditional grammars} in this paper.

\medskip

Generalising conditional grammars due to Kelemen in~1984~\cite{Kel84}, P\u{a}un introduced \emph{semi-conditional grammars} (SCGs) one year later~\cite{Pau85}. Here, each rule is associated with two strings called the \emph{permitting} and \emph{forbidden string}. A rule can be applied to a sentential form~$w$ only if~$w$ contains the permitting string (i.e., positive context) and does not contain the forbidden string (i.e., negative context) as a subword.
A semi-conditional grammar is termed \emph{simple} (denoted SSCG) if for each rule at most either the permitting string or the forbidden string is present~\cite{MedGol94}. Notice that SSCGs generalise Kelemen's conditional grammars, too.
Quite some research focused on the so-called \emph{degree} of (simple) semi-conditional grammars, which is the pair formed by upper-bounds on the length of the permitting and forbidden strings. Based on these bounds, further bounds on other descriptive complexity measures were investigated~\cite{FerKOR2021a}. For instance, the number of conditional rules have been of interest in~\cite{MedSve2002}.

\medskip
 
In this paper, our main focus is the number of non-terminals of these grammars. This measure of descriptive complexity has been investigated before with semi-conditional grammars, but only in passing. In~\cite{FerKupOla2018}, it was shown that five non-terminals suffice in the case of SCGs of degree~$(4,1)$ to describe every RE language, while in \cite{FerKOR2021a}, it was proven that six non-terminals are enough for this purpose for SSCGs.
What can be achieved if only one, two, or three non-terminals are allowed, disregarding other measures of descriptive complexity?

\subsection{Our Theorems and the Structure of This Paper}

We show that every RE language can be generated by an SSCG with three non-terminals (\Cref{thm-relang}). Furthermore, it is shown that every linear language can be generated by an SSCG using two non-terminals (\Cref{thm-linear}), and every unary regular (i.e.~unary context-free) language can be generated using one non-terminal (\Cref{thm-unaryregular}). However, there are regular languages over alphabets of size~3 which cannot be generated by any SSCG with only one non-terminal (\Cref{thm-regular-needtwo}). It is open at present if every regular language over a binary alphabet can be generated by an SSCG with two non-terminals.

We also show that there are SSCGs with one non-terminal which generate a non-context free language (Theorem~\ref{thm-oneNT-nonCF}). However, a language generated by any SSCG with one non-terminal is in  non-deterministic linear space (\Cref{thm-oneNT-linearspace}). Finally, we give a complexity characterisation for membership testing: for SSCGs with only 2 non-terminals, membership testing is already $\NP$-hard (\Cref{thm-nphardness}).

\bigskip

A summary of our results can be found in \Cref{table:results}, where we refer to the Chomsky Hierarchy explicitly. We refer the reader to the Complexity Zoo \cite{zoo} for further details on complexity classes and their relationships. For a survey on regulated rewriting, we refer to the textbook of Dassow and P\u{a}un~\cite{DasPau89}.

\begin{sidewaystable}[htbp]
\setlength{\tabcolsep}{12pt}
\setlength{\extrarowheight}{3pt}
\centering
\begin{tabular}{ccccc}
\multirow{2}{*}{Shortform} & \multirow{2}{*}{Name} & \multirow{2}{*}{Condition on SSCG} & {Condition on} & \multirow{2}{*}{Class}  \\
   & & & {Usual Grammar} & \\
   \hline
   \rowgap{5}\\
   \hline
\multirow{2}{*}{unary REG} & Regular with & SBS/EQT: $\Delta = \{S\}$ & \multirow{2}{*}{see below} & EQT: Unary \\
          & unary alphabet & {\small (\Cref{thm-unaryregular})} & & Context-Free \\
\hline
\multirow{2}*{REG, CH3} & \multirow{2}*{Regular} & SBS: $\Delta = \{S,T\}$ & \multirow{2}*{$A \ra aB | b$} &
\multirow{2}*{EQT: SPACE$(O(1))$} \\
& & {\small (\Cref{thm-regular-needtwo})} &&\\
\hline
\multirow{3}{*}{LIN} & \multirow{3}{*}{Linear}  & SBS: $\Delta = \{S,T\}$ and & \multirow{3}{*}{$A \ra aB|Ba|a$} &
\multirow{3}{*}{SBS: NLOGSPACE} \\
 & & $A \ra vaw$ & & \\
 & & {\small (\Cref{thm-linear})} &\\
\hline
\multirow{2}{*}{CH2, CF} & \multirow{2}{*}{Context-Free}  & SBS: $\Delta = \{S,T,U\}$ & \multirow{2}{*}{$A \ra BC|a$} &
SBS: POLYLOGSPACE \\
& & {\small (Theorems~\ref{thm-regular-needtwo} +~\ref{thm-relang})}  & & SBS: P, Nick's Class \\
\hline
\multirow{2}{*}{CS, CH1} & Context- & \multirow{2}{*}{EQT: $\emptyword$-free} & $v \ra w$ & 
\multirow{2}{*}{EQT: NLINSPACE} \\
  & Sensitive & &  with $|v| \leq |w|$ & \\
\hline
\multirow{2}{*}{---} & decidable, & \multirow{2}{*}{---} & \multirow{2}{*}{---} & characteristic func- \\
    & recursive  &     &     & tion total recursive \\
\hline
\multirow{2}{*}{RE}  & recursively & EQT: $\Delta = \{S,T,U\}$ & \multirow{2}{*}{no restriction} & ranges of enumer-\\
    & enumerable & {\small (\Cref{thm-relang})} & & ation procedures \\
\end{tabular}
\vspace*{1em}
\caption{EQT means ``equal to,'' SBS means ``is contained in,'' SPS means ``strictly contains.'' For SSCGs, $\Delta$ denotes the set of distinct non-terminals.}
\label{table:results}
\end{sidewaystable}

\section{Formal Definitions and Conventions}

We assume a basic understanding of formal language theory as provided in standard references (e.g.~\cite{HopUll79,Sal73}). In particular, we assume familiarity with the language classes of the Chomsky hierarchy: regular, (linear,) context-free, context-sensitive, (recursive,) and recursively enumerable, listed in ascending order of complexity. In parentheses, we have added well-known intermediate classes.

\medskip

All alphabets in this paper are finite.

\medskip

We now define (simple) semi-conditional grammars. We fix some notation: if $w$ is a word formed with letters from the alphabet~$\Sigma$, written as $w\in\sstar$, then $y\in \Sigma^*$ is a \define{subword} (sometimes also called a factor or infix) of~$w$ if there are words $x,z\in\Sigma^*$ such that $w=xyz$. The set of subwords of~$w$ is written as $\sub(w)$. As is customary,~$\emptyword$ denotes the empty word, and  we define~$\Sigma^+ = \Sigma^* - \{ \emptyword \}$.

\begin{dfn}
A \define{semi-conditional grammar} is a quadruple $G = (V_N,V_T,P,S)$, where 
\begin{itemize}
\item $V_N$ is the non-terminal alphabet,
\item $V_T$ is the terminal alphabet, $V_T \cap V_N=\emptyset$,
\item $V=V_N\cup V_T$ is the total alphabet,
\item $S \in V_N$ is the starting symbol, 
\item $P$ is a finite set of (labeled) productions of the form $\ell \colon (A \to x, \a, \b)$ with
\begin{itemize}
\item $A \in V_N$
\item $x \in V^*$
\item $\a, \b \in V^+\cup \{0\}$, where $ 0 \not\in V$ is a special symbol, intuitively meaning that the condition is missing
\item $\ell$ is the label of the production
\end{itemize}
\end{itemize}
\end{dfn}

The production~$\ell \colon (A \to x, \a, \b)$ can be applied to a word~$u \in V^*V_NV^*$ if and only if
\begin{center}
	$A\in \sub(u)$ and ($\a \in \sub(u)$ or $\a=0$) and ($\b \not \in \sub(u)$ or $\b=0$).
\end{center}
Under these conditions, $u = u_1Au_2$ will be transformed into $v = u_1xu_2$, which is denoted by $u\Ra_\ell v$. In such a case, we call $\a$ \define{permitting} and $\b$ \define{forbidden}.
When there is no risk of confusion, we avoid mentioning the label and simply write $u \Ra v$. The \define{language} of $G$ is defined as
\[
	L(G) = \set{w \in V_T^*}{S \Ra^* w }
\]
where $\Ra^*$ denotes the reflexive and transitive closure of $\Ra$.

\begin{dfn}\label[dfn]{dfn:sscg}
A semi-conditional grammar $G$ is called \define{simple} (denoted by SSCG) if for every production rule $\ell \colon (A \to x, \a, \b) \in P$ we have~$\a = 0$ or~$\b = 0$.
\end{dfn} 

A production rule $\ell \colon (A \to x, \a, \b) $ in an SSCG is called a \define{presence rule} if~$\b=0$, and an \define{absence rule} if~$\a=0$. For ease of notation, for presence/absence rules, we call the corresponding strings $\a$ and $\b$ \define{presence/absence strings}. A rule is \define{applicable} to some sentential form if the corresponding condition of said rule is satisfied by the sentential form.

\bigskip

We denote non-terminals by $S,T,U$ or $S,T$ or $S$, depending on the number of terminals required. Terminals are indexed by natural numbers; for the $k$-element language we hence use~$1,\ldots,k$, as well as~$0$ which, we recall, is a special symbol.

Further, we fix the following:
\begin{itemize}
\item Terminal symbols are denoted by letters from the beginning of the alphabet:
\[
	a,b,c,\ldots
\]
\item Words (terminals and non-terminals) are denoted by letters from the end of the alphabet:
\[
	u,v,w,x,y,z
\]
\item Non-terminals are denoted by capital letters from the beginning of the alphabet:
\[
	A,B,C,\ldots
\]
In simulated languages, we index the set of non-terminals by the natural numbers:
\[
	Z_1,Z_2,Z_3,\ldots
\]
In such cases, we decide that~$Z_0$ is the starting symbol.
\end{itemize}

\section{Two Non-Terminals Suffice for Linear Languages}
Our first result shows that any linear language, and thus any regular language, can be generated by an SSCG with two non-terminals.

\begin{thm}\label{thm-linear}
Linear languages can be generated by simple semi-conditional grammars with non-terminals~$S,T$ only.  
\end{thm}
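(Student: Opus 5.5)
We start from a linear grammar $G=(\{Z_0,\dots,Z_n\},V_T,P,Z_0)$ in the normal form where every production is $Z_i\to uZ_jv$ or $Z_i\to w$ with $u,v,w\in V_T^*$. Every sentential form of $G$ is $xZ_iy$ with $x,y$ terminal. We simulate it by an SSCG over $\{S,T\}$ whose sentential forms, outside of short transition phases, have the shape $x\,S\,T^{m}\,S\,y$. The number $m$ encodes the current non-terminal $Z_i$ via an injective code $m=a_i$. The key observation is that a presence string $ST^mS$ tests the \emph{exact} value of $m$, because the block is delimited by the two $S$'s and $x,y$ contain no non-terminals. So a rule $(T\to TT,\ ST^mS,\ 0)$ or $(T\to\emptyword,\ ST^mS,\ 0)$ changes the counter by one atomically, and only when it has the prescribed value.

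A production $p\colon Z_i\to uZ_jv$ is simulated by a deterministic chain of counter values private to $p$. First the rule $(S\to uST^{d_p},\ ST^{a_i}S,\ 0)$ commits to $p$. Applied to the left $S$, it emits $u$ and moves the counter to $b_p=a_i+d_p$; applied to the right $S$, it creates the wrong pattern, and we argue that this is a dead end because no rule ever again sees an exact block (for this we use absence strings such as $(\cdot,0,SS)$ or $(\cdot,0,ST^kSu')$ on all later rules, or choose codes so that the stray shape never matches). Then the rule $(S\to T^{e_p}Sv,\ ST^{b_p}S,\ 0)$ emits $v$ on the right and moves the counter to $c_p$. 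Finally the counter is returned from $c_p$ to $a_j$ by unit steps. A production $Z_i\to w$ is simulated by the rule $S\to w$ applied to one $S$, the rule $S\to\emptyword$ applied to the other, and $T\to\emptyword$ rules that erase the $T$'s. These erasing rules are guarded so that they only fire once the counter is in a designated terminating range, for example by requiring that the block shape is no longer $ST^mS$.

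The main obstacle is keeping the unit-step walk from $c_p$ back to $a_j$ from passing through counter values that another simulation also uses. If it did, the chain could leave the walk and branch into a foreign transition, or a rule guarded by $ST^{m}S$ could fire mid-walk. I would first try to separate the ``walking'' phase syntactically rather than numerically. During the walk, the block carries a doubled delimiter $SS\,T^{m}S$: it is produced by the second emission rule and removed by a final rule $(S\to\emptyword,\ SST^{a_j}S,\ 0)$ that applies when the counter equals $a_j$. All commitment rules then additionally require absence of $SS$. Since SSCGs allow only one condition per rule, this cannot be added directly. Instead I would make commitment rules fire only on the presence string $\sigma ST^{a_i}S$ with $\sigma\in V_T\cup\{\text{start}\}$. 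Alternatively, I would choose the codes $a_i$ to lie in one residue class modulo a large $N$ and perform each walk in jumps of size $N$ via $T\to T^{N+1}$ and the corresponding erasing rules. With this choice the intermediate values of each walk lie in a residue class reserved for $p$ and are disjoint from all codes.

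Correctness then splits into two inclusions. For $L(G)\subseteq L(G')$, each derivation step of $G$ is replaced by its chain. For the converse, we show by induction that every sentential form of $G'$ reachable from $S$ either is a dead end from which no terminal word can be derived, or lies on a chain, with the terminal material $x,y$ equal to that of a sentential form $xZ_iy$ of $G$. The start is handled by an initial rule $(S\to ST^{a_0}S,\ 0,\ T)$.
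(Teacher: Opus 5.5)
The gap is the step you flag yourself: returning the counter from $c_p$ to $a_j$. Neither remedy you suggest closes it, because descents are unavoidable and they cross codes. Commit and emission rules only add $T$'s, so $c_p=a_i+d_p+e_p\ge a_i$. Around any cycle of $G$ through two distinct non-terminals the differences $a_j-a_i$ sum to zero, so some production $p\colon Z_i\to uZ_jv$ on the cycle has $a_j<a_i$. A rule with left-hand side $T$ deletes at most one $T$, so the walk of $p$ from $c_p$ down to $a_j$ passes through every value in between, in particular $a_i$.

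\emph{Residue-class variant.} $T\to T^{N+1}$ only jumps upward, so the descent still goes by single deletions through every residue class. Moreover, a walk moving in steps of $N$ and ending at $a_j$ stays in the residue class of $a_j$. That is the class of all codes, not a class reserved for $p$.

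\emph{Doubled-delimiter variant.} The marker $SS$ is shared by all productions, so the walk-step rules and your exit rules $(S\to\emptyword,\ SST^{a_{j'}}S,\ 0)$ see only the value $m$. When the walk of $p$ reaches $m=a_i$, the exit rule for $Z_i$ applies; it exists because $Z_i$ is the target of the preceding production on the cycle. You have then simulated $Z_i\to uZ_iv$, which need not be in $G$. For example, take $Z_0\to aZ_1$, $Z_1\to bZ_0$, $Z_1\to c$, so $L=(ab)^*ac$. Your grammar then derives $abc$ if $a_0<a_1$ and $aac$ if $a_1<a_0$, assuming the first step can be made at all. In short, a single integer that is read only by exact-value tests cannot remember during a multi-step descent where it is supposed to stop.

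Three smaller points in the same part of the plan also fail as stated.
\begin{itemize}
\item A presence string cannot anchor at the beginning of the word. So the guard $\sigma ST^{a_i}S$ with $\sigma=$ ``start'' does not exist, and no commit rule fires while $x=\emptyword$, in particular right after $(S\to ST^{a_0}S,\ 0,\ T)$.
\item Adding absence strings such as $SS$ or $ST^kSu'$ to ``all later rules'' is not allowed, since those rules already carry a presence guard.
\item If $u=\emptyword$, applying the commit rule to the right $S$ is not a dead end: it leaves $ST^{a_i}S$ intact, together with stray $T$'s.
\end{itemize}

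The paper's proof avoids decrementing altogether. There is a single $S$, and each step replaces $S$ by $aT^jST^j$ (or $T^jST^ja$), writing fresh padding next to the newly emitted terminal. The current non-terminal $Z_j$ is read from the exact distance between $S$ and that terminal, via the presence strings $bT^jST^j$ or $T^jST^jb$. Old $T$'s are never touched during the derivation and are erased at the end by $T\to\emptyword$ under absence of $S$. To rescue your two-delimiter design you would need something comparable: either write the new state freshly instead of walking a shared counter, or use a walk marker that itself encodes the target and is delimited exactly.
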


\begin{proof}
Suppose the linear grammar~$G$ for a language $L$ has non-terminals~$Z_0,Z_1, Z_2, \ldots$, where~$Z_0$ is the starting symbol in~$G$. By an intermediate application of the Chomsky Normal Form Theorem, all rules in~$G$ may be assumed to be of the form~$Z_i \ra Z_j a$ or~$Z_i \ra b Z_j$ or $Z_i \ra c$, where $a,b,c \neq \emptyword$ are members of the alphabet.

Fix two non-terminals~$S,T$, where~$S$ denotes the starting symbol in our SSCG. The following table describes the SSCG for the language~$L$. By~\Cref{dfn:sscg}, it suffices to define presence and absence rules:

\begin{center}
\setlength{\extrarowheight}{3pt}
\begin{tabular}{c|c|c}
Rule & Condition on Subword & Condition on $G$\\
\hline
\rowgap{3}\\
\hline
$S \ra \emptyword$ & $T$ absent & $\emptyword \in L$\\
$S \ra a$ & $T$ absent & $Z_0 \ra a$\\
$S \ra aT^iST^i$ & $T$ absent & $Z_0 \ra a Z_i$\\
$S \ra T^iST^ia$ & $T$ absent & $Z_0 \ra Z_i a$\\
\hline
$S \ra aT^jST^j$ & $bT^iST^i$ present & $Z_i \ra a Z_j$\\
$S \ra aT^jST^j$ & $T^iST^ib$ present & $Z_i \ra a Z_j$\\
$S \ra T^jST^ja$ & $bT^iST^i$ present & $Z_i \ra Z_j a$\\
$S \ra T^jST^ja$ & $T^iST^ib$ present & $Z_i \ra Z_j a$\\
$S \ra a$ & $bT^iST^i$ present & $Z_i \ra a$\\
$S \ra a$ & $T^iST^ib$ present & $Z_i \ra a$\\
\hline
$T \ra \emptyword$ & $S$ absent & after termination\\
\end{tabular}
\end{center}

Observe that since the alphabet of~$G$ is finite, the table above is also finite. Hence, the induced SSCG is indeed well-defined. 

\subsection*{Verification} Let~$u \in L$ be given by a derivation sequence~$S \Ra^* u$. Suppose the derivation has steps~$u_1,\dots,u_k$. By definition of linear grammars, each~$u_i$ for~$i < k$ contains exactly one~$Z_{n_i}$, where~$u_1 = Z_{n_1} = Z_0$ and~$u_k = u$.

Let~$v_1,\dots,v_k$ be a parallel derivation operating on our SSCG, where~$v_1 = S$, and~$v_{i+1}$ is defined in the obvious fashion to replicate the operation in~$u_{i+1}$. Observe that, except possibly at the start, each copy of $T^i S T^i$ has a character from $\Sigma$ either to its left or to its right, but not both.
For each~$i$, let~$\e{u}_i,\e{v}_i \in \sstar$ denote the words obtained by deleting all instances of~$Z_j$ for all~$j$ in~$\e{u}_i$, and all instances of~$S$ and~$T$ in~$\e{v}_i$.
It is easily seen by induction that~$\e{u}_i = \e{v}_i$ for all~$i \le k$. The other direction is obtained analogously, mutatis mutandis.
\end{proof}

\Cref{thm-linear} is optimal: in~\Cref{thm-regular-needtwo} below we show that there exists a regular (and hence linear) language which cannot be generated by an SSCG with only one non-terminal.

\section{Regular Languages}

In this section, we consider the strength of SSCGs with one non-terminal for generating regular languages. First, we show that every unary regular language (and thus every unary context-free language) can be generated by an SSCG with one non-terminal. In~\Cref{thm-regular-needtwo}, we show that there are regular languages which cannot be generated by SSCGs having only one non-terminal.

\begin{thm}\label{thm-unaryregular}
Unary context-free languages can be recognised
by simple semi-conditional grammars with one non-terminal.   
\end{thm}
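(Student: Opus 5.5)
The plan is to reduce to regular languages and then use the letter $a$ itself as the only ``state'' information. By Parikh's theorem (or the classical fact that every unary context-free language is regular), a unary context-free language $L\subseteq\{a\}^*$ is regular. So it is ultimately periodic: there are a finite set $F\subseteq\mathbb{N}$, a threshold $t\ge 1$, a period $p\ge 1$ and a set $R\subseteq\{0,\dots,p-1\}$ with
\[
L=\set{a^n}{n\in F}\cup\set{a^{t+r+kp}}{r\in R,\ k\ge 0}.
\]
We may assume $t\ge 1$ by enlarging $t$ and moving finitely many words into $F$. The one non-terminal $S$ does not need to encode any state. The only thing to detect is whether the derivation is at its very first step, and the absence or presence of the terminal $a$ in the sentential form does exactly that.

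I would take the SSCG with the single non-terminal $S$ and the following rules:
\begin{itemize}
\item $S\ra a^n$ with $a$ absent, for each $n\in F$ (this includes $S\ra\emptyword$ if $0\in F$);
\item $S\ra a^{t+r}S$ with $a$ absent, for each $r\in R$;
\item $S\ra a^pS$ with $a$ present;
\item $S\ra\emptyword$ with $a$ present.
\end{itemize}
All conditions are single letters, so this is a finite SSCG in the sense of \Cref{dfn:sscg}.

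For the verification, I would first note that every sentential form contains at most one occurrence of $S$, because no rule introduces a second one. The initial form $S$ contains no $a$, so only the first two kinds of rule apply to it. A first rule of the first kind terminates at once with a word $a^n$, $n\in F$. A first rule of the second kind yields $a^{t+r}S$, which contains $a$ because $t\ge 1$. From then on the sentential form always contains $a$, so the absence rules are never applicable again. The only remaining moves are to append $a^p$ or to erase $S$.

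Hence the terminal words are exactly the $a^n$ with $n\in F$ and the $a^{t+r+kp}$ with $r\in R$ and $k\ge0$, which is $L$. Conversely, each such word is obtained by the obvious derivation. This is an induction on the derivation length with the invariant ``the form is $S$, or it is $a^{t+r+kp}S$ with $k\ge 0$, or it is terminal''.

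The step I expect to need the most care is keeping the one-off choice of finite word or residue class from being reused later, for instance by an absence rule firing again in the middle of a derivation. The normalisation $t\ge1$ rules this out, since after the first step the letter $a$ is permanently present. The only other subtle point is the empty word, which is handled separately through $0\in F$.
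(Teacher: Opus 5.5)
Your proposal is correct and is essentially the paper's proof. The paper writes $L = A \cup B\cdot(1^h)^*$ with $A,B$ finite and $\emptyword\notin B$, which corresponds to your $F$ and $\{a^{t+r} : r\in R\}$, with your normalisation $t\ge 1$ playing the role of $\emptyword\notin B$. It uses the same four rules: absence of the terminal governs the one-off first step, and presence of the terminal governs pumping by the period and erasing $S$. Your write-up also states the pumping rule correctly as $S\ra a^pS$, where the paper's table has the typo $0^hS$.
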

\begin{proof}
Consider any context-free language~$L$ over a unary alphabet~$\{ 1 \}$, and recall that such languages are always regular~\cite{GinsRice}. Further, note that, by Parikh's Theorem, there exists an~$h$ such that~$L=A \cup B \cdot \left( 1^h \right)^*$, where~$A$ and~$B$ are finite sets and~$ \emptyword \not \in B$. The following SSCG with non-terminal~$S$ recognises~$L$.

\begin{center}
\setlength{\extrarowheight}{3pt}
\begin{tabular}{rc|c|c}
& Rule & Condition on Subword & Condition on $L$\\
\hline
\rowgap{4}\\
\hline
(1) & $S \ra w$ & $1$ absent & $w \in A$\\
(2) & $S \ra wS$ & $1$ absent & $w \in B$\\
(3) & $S \ra 0^hS$ & $1$ present & ---\\
(4) & $S \ra \emptyword$ & $1$ present & ---\\
\end{tabular}
\end{center}

\subsection*{Verification} If~$w \in A \cup B$ then clearly our SSCG generates it through rules~(1), (2), and~(4). Otherwise,~$w = u 1^{(nh)}$ for some~$u \in B$ and some~$n$. Our SSCG obtains~$w$ then through rule~(2), followed by~$n$ applications of rule~(3), and one application of rule~(4). It is clear that these are the only words generated by our SSCG.
\end{proof}

We now show that there exists a regular language over three characters which cannot be obtained through a simple semi-conditional grammar with only one non-terminal.

\begin{thm} \label{thm-regular-needtwo}
Fix~$\Sigma = \{ a,b,c \}$ and let~$L = a^* \cup b^* \cup c^* $. There is no simple semi-conditional grammar with only one non-terminal~$S$ which generates~$L$.
\end{thm}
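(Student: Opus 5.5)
The plan is a proof by contradiction. Suppose $G$ is an SSCG with the single non-terminal $S$ and $L(G)=a^*\cup b^*\cup c^*$. Let $K$ bound the lengths of all right-hand sides and of all permitting and forbidden strings of $G$, and fix $N\gg K$. Terminals are never erased, so every sentential form in a derivation of $a^N$ is a word over $\{a,S\}$; the same holds for $b$ and for $c$. The aim is to splice pieces of two such derivations into a derivation of a terminal word containing two different letters, which lies outside $L$.

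\textbf{Step 1: an unguarded rule for each letter.} Since $N>K$, every derivation of $a^N$ has a step that introduces an $a$. Consider such steps that are applied to a sentential form which still contains $S$ and is not $a$-free in a "late" sense; I will choose one carefully later. Call it an $a$-step, with rule $\ell_a\colon(S\to x_a,\alpha,\beta)$, where $x_a\in\{a,S\}^*$ contains $a$. Define $\ell_b$ and $\ell_c$ in the same way.

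\textbf{Step 2: the three-letter pigeonhole.} This is the heart of the argument. If $\ell_a$ is a presence rule, its permitting string lies in $\{a,S\}^+$, because it occurs in the form to which $\ell_a$ was applied. If it is an absence rule, it forbids a single string $\beta$, and $\beta$ can contain at most one of the "foreign" letters in a way that matters. I would therefore classify each $\ell_x$ by which foreign letters, if any, its condition can detect. A single forbidden string cannot rule out both $b$-contexts and $c$-contexts. Hence there is an ordered pair of distinct letters, say $(a,b)$, such that $\ell_a$ stays applicable to a suitable sentential form of the $b^N$-derivation. Such a form contains $S$ and a long block of $b$'s, does not contain $\beta$, and contains $\alpha$ if $\ell_a$ is a presence rule. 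Choosing $N\gg K$ lets me pick a $b$-form whose $S$-neighbourhoods look like those in the $a$-form, up to the letter involved.

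\textbf{Step 3: finishing the mixed derivation.} Apply $\ell_a$ to one occurrence of $S$ in that $b$-form. Then continue to termination by replaying the remaining rules of the $b^N$-derivation on the other occurrences of $S$, and the $a$-derivation's continuation on the occurrences coming from $x_a$. For this replay to be legal, I need every subsequent condition to still hold after the foreign material is inserted. Absence conditions might now be violated by strings mixing $a$ and $b$, and presence conditions might lose their witnesses. To control this, I choose the $a$-step and the $b$-form so that only a bounded window around each $S$ is relevant, and only lengths up to $K$ matter. Where possible, I take the last step introducing a foreign letter, so that the remaining derivation is short or uses only rules applicable in both contexts. The result is a terminal word containing both $a$ and $b$, which contradicts $L(G)=L$.

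\textbf{Main obstacle.} I expect the main obstacle to be Step 3: guaranteeing that the replayed continuation stays applicable in the hybrid form. This is also where the third letter is needed, since with only two letters one forbidden string per rule can suffice to separate them. My first attempt will be to pick, among all derivations of $a^N$, $b^N$ and $c^N$, ones that are minimal in a suitable sense, for instance with the fewest applications of absence rules after the first terminal appears. I would then argue by minimality that the continuation consists of rules whose conditions refer only to $S$ and to one letter. Such rules transfer, and the pigeonhole over the three letters applies to them as well.
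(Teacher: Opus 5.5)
Your proposal has a genuine gap, and it is the step you flag yourself. Step~3 is the whole difficulty, and the route you sketch for it fails. Replaying the rest of the $b^N$-derivation, and the continuation of the $a$-step, inside the hybrid form breaks down whenever those continuations remove $S$ by absence rules whose forbidden string contains the foreign letter. For instance, suppose $b$-words can only be finished by $S\to\varepsilon$ with $a$ absent, and $a$-words only by $S\to\varepsilon$ with $b$ absent. Then none of the replayed erasing steps is applicable in an $\{a,b\}$-hybrid. Choosing ``minimal'' derivations cannot avoid these rules, because the grammar offers no alternative. A rule whose condition mentions only $S$ and one letter does not transfer when that letter is the foreign one. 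So the pair of letters to mix cannot be read off from the letter-introducing rule $\ell_a$, as your Step~2 does; it must be chosen relative to how the grammar erases $S$.

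Steps~1--2 also need repair. Unless the $a$-step is the \emph{first} terminal-producing step, applied to a form $S^g$, a presence rule with permitting string such as $aS$ is never applicable in a $b$-form. Even for the first step, you must transplant the prefix $S\Rightarrow^* S^g$ into a context $b^m\,\cdot\,b^{m'}$. That transplant fails exactly for absence rules in the prefix whose forbidden string contains $b$, a constraint your plan never tracks.

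The missing idea, around which the paper's proof is organised, is a case analysis on the rules $S\to\varepsilon$. Because $\varepsilon\in L$, the lone form $S$ must be erased by a rule applicable to it: an unconditional eraser or an absence-rule eraser (trivial conditions ``$S$ present/absent'' being discarded).

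Suppose there is an unconditional eraser, or erasers whose forbidden strings contain two different letters $d\neq e$. Then the eraser forbidding $d$ clears every $d$-free form. The paper takes the penultimate form $e^mSe^{m'}$ of a long $e$-word and transplants into it the derivation $S\Rightarrow^*S^g\Rightarrow S^ifS^j$ of the one-letter word $f$. It needs a separate pump-and-cut-back argument, using a rule $S\to S^{\ge 2}$ and the eraser, when the prefix uses absence rules mentioning $d$ or $e$. Erasing then yields $e^mfe^{m'}$.

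The remaining cases need further subcases. These are: exactly one letter forbidden by an eraser, or only erasers conditioned on ``$S^h$ absent''/``$S^{h'}$ present'', where the lengths of $S$-runs must be controlled. In every case the three-letter pigeonhole is applied \emph{relative to the letter(s) forbidden by the erasers}. That coordination is what your proposal lacks.
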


\begin{proof}
We will show that every SSCG of exactly one non-terminal which generates~$L$ also generates a string containing at least two distinct characters from~$\Sigma$, which, by definition of~$L$, proves the theorem.

In our proof, we carry out a case analysis based on the different productions. To limit the number of those---and hence the number of cases we need to consider---we note the following: 
\begin{itemize}
	\item Without loss of generality, assume that no side condition is ``presence of~$S$'' or ``absence of~$S$''. These can be omitted without loss of generality.
	\item If there is a rule~$S \ra \beta$ in absence of~$\alpha$, and there is a rule~$S \ra \beta$ in presence of~$\alpha'$, with~$\alpha'$ being a substring of~$\alpha$, then we can drop these rules and replace them by~$S \ra \beta$, without any side condition.
	\item Since~$\emptyword \in L$ and presence rules cannot eliminate all~$S$, there must either be an unconditional rule~$S \ra \emptyword$ or an absence rule~$S \ra \emptyword$.
\end{itemize}

We now move on to our detailed case analysis. This leads to a lot of short-lived notation, which we try to preempt by the following convention:
\begin{itemize}
	\item Numbered rules (e.g.~$R_{1,d}$,~$R_3$) are \define{global names}. In contrast, rules without subscripts (e.g.~$R$,~$R'$,) are \define{local names} and only refer to the case/subcase in which they appear.
\end{itemize}

We now commence the proof proper.

\medskip

Let~$G$ be an SSCG. Consider the following rules which each may or may not exist in~$G$. Observe that we indicate additional constraints in the last column.

\begin{center}
\setlength{\extrarowheight}{3pt}
\begin{tabular}{ lc | C{5cm} | c }
& Rule & Condition on Subword & Constraint \\
\hline
\rowgap{4}\\
\hline
($R_{1,d}$) & $S \ra \emptyword$ & absence string contains~$d$ & $d \in \Sigma$\\
($R_2$) & $S \ra \emptyword$ & no side condition & ---\\
($R_3$) & $S \ra \emptyword$ & $S^h$ absent & $h$ maximised\\
($R_4$) & $S \ra \emptyword$ & $S^{h'}$ present & $h'$ minimised\\
\end{tabular}
\end{center}

Note that if both~$R_3$ and $R_4$ are present, then one may assume w.l.o.g.\ that~$h' > h$; otherwise,~$S \ra \emptyword$ is applicable regardless of the sentential form containing~$S$).

\medskip

Now, assume~$G$ generates~$L$. We argue by contradiction.

\medskip

The following claim is immediate from the fact that~$d^* \subset L(G)$ for every~$d \in \Sigma$.

\begin{claim}\label{clm1}
Let~$d \in \Sigma$. There exist $m, m'$ with $m+m'>0$, such that
\[
	S \Ra^* d^m S d^{m'}
\]
where~$m, m'$ depend on~$d$.
\end{claim}
We will use this claim to produce contradictions in the main part of this proof. To do so, we will repeatedly focus on the derivation in~$G$ of single characters from~$\Sigma$. For~$d \in \Sigma$, define the derivation~$D_d$ by
\begin{align}\label{eq:deriv}
	D_d \colon S \Ra_1^* S^g \Ra_2 S^i d S^j \Ra_3^* d
\end{align}
where, as in \Cref{clm1}, all of~$g,i,j$ depend on~$d$.
For convenience, we have labelled the derivation steps
\[
	\Ra_1^*, \hspace*{1em} \Ra_2, \text{\hspace*{1em} and \hspace*{1em}} \Ra_3^*
\]
and we refer to them repeatedly below. We also repeatedly use the following fact about the structure of~$\Ra_3^*$ in~$D_d$:

\begin{claim}\label{clm2}
	The derivation step~$\Ra_3^*$ in~$D_d$ is w.l.o.g.\ of one of the following forms:
	\begin{enumerate}
		\item $\Ra_3^*$ is the empty step; \label{s1}
		\item $\Ra_3^*$ is a repeated application of~$R_{1,e}$ for some~$e \in \Sigma - \{ d \}$; \label{s2}
		\item $\Ra_3^*$ is a repeated application of an absence rule~$R_{5,d}$ eliminating instances of~$S$;  \label{s3}
		\item $\Ra_3^*$ is a repeated application of an absence rule~$R_{5,d}$ and a presence rule~$R_{6,d}$, both eliminating instances of~$S$;  \label{s4}
		\item $\Ra_3^*$ is a repeated application of an absence rule~$R_{7,d} \colon S \ra S^{\ge 2}$ whose absence string contains~$e \in \Sigma - \{ d \}$, followed by repeated applications of~$R_{5,d}$ and~$R_{6,d}$. \label{s5}
		\item $\Ra_3^*$ is a sequence of presence rules and absence rules, whose absence strings contain only~$d$ (if any characters from~$\Sigma$), followed by repeated applications of~$R_{5,d}$ and~$R_{6,d}$. \label{s6}
	\end{enumerate}
\end{claim}

\begin{cproof}
	Case~(\ref{s1}) follows in the case that~$i = j = 0$ in~(\ref{eq:deriv}).
	
	For case~(\ref{s2}), note that if~$\Ra_3^*$ uses~$R_{1,e}$ for some~$e \neq d$, then a repeated application of said rule will yield~$d$, by the structure of the left-hand side of~$\Ra_3^*$.
		
	Now, let~$R_{5,d} \colon S \ra \emptyword$ be an absence rule, and suppose case~(\ref{s3}) fails. We show that we must be in case~(\ref{s4}),~(\ref{s5}), or~(\ref{s6}).
	
	First, w.l.o.g.\ we may assume that~$R_{5,d}$ is the last rule applied to eliminate all instances of~$S$. (Note that it may be the case that no rule of the form~$R_{5,d}$ exists.) Since case~(\ref{s3}) fails, there must be a presence rule~$R_{6,d}$ which is applied \emph{before} the last sequence of~$R_{5,d}$-applications has become valid. Note that the presence string of~$R_{6,d}$ and the absence string of~$R_{5,d}$ satisfy the following relationship:
	\begin{quote}
		For any subword in the derivation step~$\Ra_3^*$ containing the presence string of~$R_{6,d}$, there exists an instance of~$S$ after whose elimination:
		\begin{itemize}
			\item $R_{6,d}$ is applicable, or
			\item $R_{5,d}$ is applicable, or
			\item no instances of~$S$ remain.
		\end{itemize}
	\end{quote}
	In the last case, we are clearly done. Similarly,  once~$R_{5,d}$ becomes applicable, we use it repeatedly to eliminate all instances of~$S$, also completing the proof. This gives case~(\ref{s4}).
	
	Finally, it is possible that~$\Ra_3^*$ first \emph{adds} instances of~$S$. To that end, let~$R_{7,d} \colon S \ra S^{\ge 2}$ be an absence rule whose absence string contains~$e \in \Sigma - \{ d \}$. (Note the role of~$d$.) Since~$\Ra_3^*$ eventually eliminates all instances of~$S$, observe that~$R_{7,d}$ must indeed be an absence rule, and that~$R_{7,d}$ must be applicable \emph{before} the repeated applications of~$R_{6,d}$ and, perhaps eventually,~$R_{5,d}$. Indeed, it follows that~$R_{7,d}$ is applicable until the presence string for~$R_{6,d}$ has been generated. This gives case~(\ref{s5}).
	
	If~$R_{7,d}$ is not used, then, by exhaustion, only presence rules or absence rules whose absence strings only contain~$d$ (if any characters from~$\Sigma$) can precede~$R_{6,d}$. This is exactly case~(\ref{s6}).
\end{cproof}

\smallskip

We now carry out an analysis of each case of the derivation~$D_d$ in (\ref{eq:deriv}), each of which will lead to a contradiction through \Cref{clm1}. The cases depend on the productions in our SSCG~$G$.

\subsection*{Case 1: There exist~$R_{1,d}$ and~$R_{1,e}$ for~$d \neq e$, or there exists~$R_2$.}
Since the case for~$R_2$ is analogous, we only show the case for~$R_{1,d}$ and~$R_{1,e}$ below.

Consider $f \in \Sigma - \{d,e\}$, and the derivation
\[
	D_f \colon S \Ra_1^* S^g \Ra_2 S^i f S^j \Ra_3^* f.
\]
Suppose step~$\Ra_2$ uses an absence rule~$R$. If the absence string of~$R$ contains a character in~$\{ d,e \}$, then w.l.o.g.~let it be~$d$. Now, consider the following derivation, whose steps are justified individually: for some~$m + m' > 0$, we have
\begin{align}
	S \underbrace{\Ra^*}_{\text{\Cref{clm1}}} e^m S e^{m'} \underbrace{\Ra^*}_{R} e^mS^{i'} f S^{j'} e^{m'} \underbrace{\Ra^*}_{R_{1,d} \text{ or } R_2} e^m f e^{m'} .\label{eq1}
\end{align}
for some~$i',j'$.

Otherwise, step~$\Ra_2$ uses some presence rule~$R$ with presence string~$S^{g'}$, for some~$g' \leq g$. Assume this for the remainder of Case~1.

Suppose step~$\Ra_1^*$ uses some absence rule $R' \colon S \ra S^{>1}$ with absence string containing a character in~$\{ d,e \}$, say~$d$. Then, for some~$m+m' > 0$ and some~$g''$, we have
\begin{align}
	S \underbrace{\Ra^*}_{\text{\Cref{clm1}}} e^m S e^{m'} \underbrace{\Ra^*}_{R'} e^m S^{g''} e^{m'} \underbrace{\Ra^*}_{R_{1,d}} e^m S^{g} e^{m'} \underbrace{\Ra}_{R} e^mS^i f S^j e^{m'} \underbrace{\Ra^*}_{R_{1,d}} e^m f e^{m'}. \label{eq2}
\end{align}

Otherwise, step~$\Ra_1^*$ does not use any absence rule~$R' \colon S \ra S^{> 1}$ with absence string containing a character in~$\{ d,e \}$. Then, for some~$m + m' > 0$, we have
\begin{align}
	S \underbrace{\Ra^*}_{\text{\Cref{clm1}}} e^m S e^{m'} \underbrace{\Ra^*}_{\Ra_1^* \text{ in } D_f} e^m S^{g} e^{m'} \underbrace{\Ra}_R e^m S^i f S^j e^{m'} \underbrace{\Ra^*}_{R_{1,d}} e^m f e^{m'} . \label{eq3}
\end{align}

\medskip

Since~\Cref{eq1,eq2,eq3} all generate strings of more than one character from~$\Sigma$, Case~1 is proven.

\subsection*{Case 2: There exists~$R_{1,d}$ for exactly one~$d \in \Sigma$.}

Let~$e,f$ be the two other characters in~$\Sigma - \{ d \}$.
We handle a simple case first. W.l.o.g.\ fix~$e$ and suppose in the derivation
\[
	D_e \colon S \Ra_1^* S^g \Ra_2 S^i e S^j \Ra_3^* e
\]
that step~$\Ra_2$ uses some rule~$R$ which is  either an unconditional rule (i.e.~$\alpha = \beta = 0$ in \Cref{dfn:sscg}) or an absence rule whose absence string does not contain~$f$. Then, for~$m + m' \ge 1$ and some~$i',j'$, we have
\begin{align}
	S \underbrace{\Ra^*}_{\text{\Cref{clm1}}} f^m S f^{m'} \underbrace{\Ra}_{R} f^m S^{i'} e S^{j'} f^{m'} \underbrace{\Ra^*}_{R_{1,d}} f^m e f^{m'} \label{eq4}
\end{align}
which generates a string containing two distinct characters from~$\Sigma$, giving a contradiction.

Thus, assume that in the derivation~$D_f$ (respectively~$D_e$), the rule having~$f$ (resp.~$e$) on the right-hand side is rule~$R_{8,f}$ (resp.~$R_{8,e}$) which is either a presence rule whose presence string contains only instances of~$S$, or an absence rule whose absence string contains~$e$ (resp.~$f$). Now, we focus on the steps in the derivation
\[
	D_d \colon S \Ra_1^* S^g \Ra_2 S^i d S^j \Ra_3^* d.
\]
If step~$\Ra_2$ requires an absence rule whose absence string contains a character other than~$d$, then w.l.o.g.\ assume this character is~$e$.

We break down the remaining argument for Case~2 into subcases.

\subsubsection*{Case 2.1: Step~$\Ra_3^*$ requires~$R_{7,d}$.}
Suppose
\[
	S^i d S^j  \, \underbrace{\Ra^*}_{R_{7,d}} \, \alpha \, \underbrace{\Ra^*}_{R_{6,d}} \, d
\]
where $\alpha$~also contains the presence string of~$R_{6,d}$. Assume w.l.o.g.\ that~$\alpha$ starts with~$S$ (if~$\alpha$ ends with~$S$, argue similarly). Since~$R_{7,d}$ only adds instances of~$S$ and so~$\alpha$ does not contain~$e$, the rule~$R_{8,f}$ is applicable in the derivation below, which also uses \Cref{clm2} for~$D_d$:
\begin{align}
	S \underbrace{\Ra^*}_{\Ra_1^* \, + \; \Ra_2} S^i d S^j \underbrace{\Ra^*}_{R_{7,d}} \alpha \underbrace{\Ra^*}_{R_{7,d}} S^p \alpha \underbrace{\Ra^*}_{R_{8,f}} S^{p'} f S^{p''} \alpha \underbrace{\Ra^*}_{R_{6,d}} f \alpha \underbrace{\Ra^*}_{D_d} fd \label{eq5}
\end{align}
where
\[
	p \geq \begin{cases}
		1 + \text{length of the presence string in $R_{8,f}$} & \text{if~$R_{8,f}$ is a presence rule}\\
		1 & \text{otherwise.}
	\end{cases}
\]
Note that the ``otherwise''-case is sufficient as it is assumed that if~$R_{8,f}$ is an absence rule then that absence rule contains the terminal~$e$.

\subsubsection*{Case 2.2: Not Case 2.1 and the step~$\Ra_1^*$ requires a rule~$R$ of the form~$S \ra S^{\geq 2}$, whose absence string contains a character from $\Sigma$} In this case, again using \Cref{clm2} for~$D_d$,
\begin{align}
	S \underbrace{\Ra^*}_{D_d} S^g \underbrace{\Ra^*}_R S^{g+p} \underbrace{\Ra}_{R_{8,f}} S^g S^{p'} f S^{p''} \underbrace{\Ra^*}_{R_{1,d}} S^g f \underbrace{\Ra^*}_{D_d} df \label{eq6}
\end{align}
where, as before,
\[
	p \geq \begin{cases}
		1 + \text{length of the presence string in $R_{8,f}$} & \text{if~$R_{8,f}$ has a presence rule}\\
		1 & \text{otherwise}
	\end{cases}
\]
which suffices, as before, since we assumed that if~$R_{8,f}$ has an absence rule, then that absence rule can only contain the terminal~$e$. 

\subsubsection*{Case 2.3: Not Case~2.1 and Not Case~2.2} In this case, the derivation~$D_d$ only requires absence of~$e$ (that is, any absence rule appearing in this derivation contains either~$e$ or no characters from~$\Sigma$). Thus, \Cref{clm2} for~$D_d$ implies
\begin{align}
	S \underbrace{\Ra^*}_{\text{\Cref{clm1}}} f^m S f^{m'} \underbrace{\Ra^*}_{D_d} f^m d f^{m'} \, . \label{eq7}
\end{align}
(Note that this derivation could also yield~$df$.)

\medskip

Since derivations~(\ref{eq4}) to~(\ref{eq7}) all generate strings of more than one character from~$\Sigma$, Case~2 is also proven.

\subsection*{Case 3: Rules~$R_{1,d}$ and~$R_2$ do not exist for any~$d \in \Sigma$.}
Observe that in this case~$R_3$ must exist in~$G$ since otherwise~$G$ could not generate~$\emptyword$. Further, recall that all rules
\[
	R_{5,d} \colon S \ra \emptyword
\]
are absence rules with absence string~$S^h$.

Further, the presence strings of the rules~$R_{6,d}$ are of the form~$S^{i'} d S^{j'}$, where~$i', j' \le h$, and at most one of~$i', j'$ can be~$h$ (recall that~$S \ra \emptyword$ in presence of~$S^h$ cannot be valid).

\medskip

For~$d \in \Sigma$, we again focus on
\[
	D_d \colon S \Ra_1^* S^g \Ra_2 S^i d S^j \Ra_3^* d \, .
\]

Our aim is to limit the number of distinct absence rules containing a character in absence strings used in~$D_d$. We call such rules \define{special rules}. A rule is \define{$S_d$-special} if it is an absence rule of the form~$S \ra S^{\geq 2}$, and its absence string contains~$d \in \Sigma$. For example,~$R_{7,d}$ is either~$S_e$- or~$S_f$-special, where~$e,f \neq d$.

We do not have much control over the derivation step~$\Ra_2$ in~$D_d$. In step~$\Ra_1^*$ of~$D_d$, note that the absence rules containing a character from~$\Sigma$ in the absence string are~$S_x$-special, as there are no~$R_{1,x}$ rules in the derivation---this follows since neither Case~1 nor Case~2 applies.

\medskip

Below, we modify the derivation of step~$\Ra_1^*$ in~$D_d$ to ensure that at most one~$S_x$-special rule is used. This simplifies our case analysis later.

\subsubsection*{The derivation step $\Ra_1^*$}
If step~$\Ra_1^*$ uses no~$S_x$-special rules, then there is nothing to do.

If Rule~$R_4$ is used, then~$g \geq h'-1$, as there is no valid condition for usage of any rule~$S \ra \emptyword$ which is satisfied by~$S^{h'-1}$, since~$h'-1 \geq h$). Then, using any~$S_x$-special rule~$R$ one obtains
\[
	S \underbrace{\Ra^*}_R S^{g'} \underbrace{\Ra^*}_{R_4} S^g
\]
where~$g' \ge g$.

Note that the applications of~$R_4$ need not be required.
So, suppose $R_4$ is never used.

If some $S_x$-special rule
\[
	R_{10} \colon S \ra S^p
\]
with $2 \leq p < h$ is used, then let~$m \in \{ 1,2,\ldots,p-1\}$ be such that~$g \equiv m \mod (p-1)$.
Then
\[
	S \underbrace{\Ra}_{R_{10}} S^p \underbrace{\Ra^*}_{R_3} S^m \underbrace{\Ra^*}_{R_{10}} S^g
\]
which is as desired.

Now, we are left with only those cases in which
\begin{itemize}
	\item rule $R_4$ is not used, and
	\item only~$S_x$-special rules with~$S \ra S^p$, with~$p \geq h$, are used.
\end{itemize}
Note that~$R_3$ is not used after any of the special rules in this part of the derivation, since~$R_3$ requires the absence of~$S^h$, by definition. If multiple~$S_x$-special rules are used, then define the absence rule
\[
	R_{11} \colon S \ra S^p
\]
with~$p$ minimal, and observe that
\[
	S \underbrace{\Ra^*}_{R_{11}} S^{1+r(p-1)}
\]
for any~$r \geq 0$, and that
\begin{align}
	2h-1 \leq 2p-1 \leq g \leq g+i'+j'+1 \leq 1+r'(p-1) \tag{$*$} \label{ineq}
\end{align}
for large enough~$r'$, where~$i',j'$ are as they appear in~$R_{6,d}$; otherwise, take~$i'=j'=0$.

\medskip

In summary, from the above analysis, we know that either
\[
	S \underbrace{\Ra^*}_{R_{11}} S^g
\]
or
\begin{center}
	$S \underbrace{\Ra^*}_{R_{11}} S^{g'}$ and $S \underbrace{\Ra^*}_{R_{11}} S^{g''}$
\end{center}
and
\[
	2p-1 = g' \le g \le g'' = r' (p-1) + 1
\]
which follows from~(\ref{ineq}). Furthermore, these derivations use at most one $S_x$-special rule each.

\medskip

In the case~$g'<g$, we have~$g \geq 2h$, and
thus~$i \ge h$ or~$j \ge h$. Therefore, the derivation~$\Ra_3^*$ uses rule~$R_{6,d}$. We may also choose the instance of~$S$ in derivation step~$\Ra_2$ so that both~$i \ge h-1$ and~$j \ge h-1$. Thus,~$R_{6,d}$ can always be made applicable in~$S^i d S^j$, and even if we choose~$g'$ or $g''$ instead of $g$, if the rule~$R$ used in~$\Ra_2$ is applicable to~$S^{g'}$ ($S^{g''}$, respectively), then the same derivation~$S^{g'} \Ra^* d$ ($S^{g''} \Ra^* d$, respectively) can be carried out, too.
Note:
\begin{itemize}
	\item if~$R$ is a presence rule, then it is also applicable to~$S^{g''}$, and
	\item if~$R$ is an absence rule, then it is also applicable to~$S^{g'}$.
\end{itemize}
Thus, we may assume w.l.o.g.\ that in~$D_d$, $g$ is modified to~$g'$ or~$g''$, based on whether we have~$S^{g'} \Ra^* S^i d S^j$ or~$S^{g''} \Ra^* S^i d S^j$.

In conclusion, we know that~$\Ra_1^*$ and~$\Ra_3^*$ use at most one absence rule each, which is~$S_x$-special for some~$x$. Furthermore, the~$S_x$-special rule used in~$\Ra_1^*$ can be assumed to be the same for all $D_x$ (with~$g$ depending on~$x \in \Sigma$).

\medskip

We now complete the argument of this Case~3 by another case analysis.
First, assume:
\begin{itemize}
	\item[$\diamond$] no~$S_x$-special rules are used
\end{itemize}
Then, we may choose any~$d$ and consider again
\[
	D_d \colon S \Ra_1^* S^g \Ra_2 S^i d S^j \Ra_3^* d \, . \label{dag} \tag{$\dagger$}
\]
If~$\Ra_2$ is an absence rule and contains some character in~$\Sigma$, assume w.l.o.g.\ that it is~$e$. Let~$f \in \Sigma - \{ d,e \}$. Then, for~$m + m' > 0$, \Cref{clm2} implies
\begin{align}
	S \underbrace{\Ra^*}_{\text{\Cref{clm1}}} f^m S f^{m'} \underbrace{\Ra^*}_{D_d} f^m d f^{m'} . \label{eq8}
\end{align}

\medskip

Next, we assume:
\begin{itemize}
	\item[$\diamond$] there exists an~$S_d$-special rule~$R_{12}$ in~$G$ which is the only~$S_x$-special rule appearing in~$\Ra^*_1$ of~(\ref{dag}).
\end{itemize}

In~(\ref{dag}), suppose~$\Ra_2$ employs rule~$R$, which is either a presence rule or an absence rule with absence string containing a character in~$\Sigma-\{d\}$; w.l.o.g.\ assume this is~$e$. We may assume that if any $S_x$-special rule is used in~$\Ra_3^*$, then its use is made void by using~$R_{12}$ in~$S^g \Ra^* S^{\geq g+2h}$; this is done by splitting~$S^{\ge 2h}$ into sufficiently large blocks either side of~$d$ so that~$R_{6,d}$ becomes applicable.
In particular, for~$m + m' > 0$,

\begin{align}
	S \underbrace{\Ra^*}_{\text{\Cref{clm1}}} f^m S f^{m'} \underbrace{\Ra^*}_{R_{12}} f^m S^{\geq g+2h}f^{m'} \underbrace{\Ra}_R f^m S^{i+h} d S^{\geq j+h}f^{m'} \underbrace{\Ra^*}_{R_{6,d} \text{ or } R_{5,d}} f^m d f^{m'} \label{eq9}
\end{align}
where the last step uses either~$R_{6,d}$ until all~$S$ are eliminated, or~$R_{5,d}$, if applicable (cf.~\Cref{clm2}).

\medskip

Finally, the only case left is that in which:
\begin{itemize}
	\item[$\diamond$] the step $\Ra_2$ uses an absence rule~$R$ containing no characters from~$\Sigma - \{ d \}$.
\end{itemize}
In that case, if~$\Ra_3^*$ uses an~$S_x$-special rule, and its absence string has a character in~$\Sigma - \{ d \}$, assume w.l.o.g.\ it is~$e$. Then, for~$m + m' > 0$,
\begin{align}
	S \underbrace{\Ra^*}_{\text{\Cref{clm1}}} f^m S f^{m'} \underbrace{\Ra^*}_{D_d} f^m S^{g} f^{m'} \underbrace{\Ra}_{R} f^m S^{i} d S^{j}f^{m'} \underbrace{\Ra^*}_{D_d} f^m d f^{m'} . \label{eq10}
\end{align}

\medskip

Since derivations~(\ref{eq8}) to~(\ref{eq10}) all generate words with distinct characters from~$\Sigma$, the result follows.

\bigskip

The above exhaustive case analysis shows that no SSCG can recognise our fixed ternary language~$L$. This completes the proof.
\end{proof}

Since all unary regular languages need only one non-terminal (\Cref{thm-unaryregular}) while there exists a ternary regular language which requires two non-terminals (\Cref{thm-regular-needtwo}), the answer to the following question would prove optimality:

\begin{question}
Can all regular (or linear) languages over a binary alphabet be recognised by a simple semi-conditional grammar with only one non-terminal?
\end{question}

\section{Complexity of SSCGs With One Non-Terminal}
In this section, we consider the complexity strength of SSCGs with one non-terminal. We first show that such SSCGs can generate non-context free languages.

\begin{thm}\label{thm-oneNT-nonCF}
Let~$\Sigma = \{ 0,1,2,3,4,5 \}$. There is a non-context-free language on~$\Sigma$ which can be generated by a simple semi-conditional grammar with only one non-terminal.
\end{thm}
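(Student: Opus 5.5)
The plan is to exhibit an explicit one-non-terminal SSCG $G$ over $\Sigma=\{0,\dots,5\}$ and show that $L(G)\cap R$ is not context-free for a suitable regular set $R$. Since the context-free languages are closed under intersection with regular sets and under homomorphisms, this shows that $L(G)$ itself is not context-free. The target after intersection (and possibly after erasing marker letters) is a language with exponential length growth, such as $\set{0^{2^n}}{n\ge 0}$, or a triple-count language such as $\set{1^n2^n3^n}{n\ge1}$. Neither is context-free, by the pumping lemma or by the semilinearity of unary context-free languages (cf.\ the proof of \Cref{thm-unaryregular}).

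Since there is only one non-terminal, all control information must be carried by terminals placed next to the copies of $S$. Presence and absence conditions can then test words such as $2S$ or $3S$. Concretely, every occurrence of $S$ is always immediately preceded by a ``phase letter'' from $\{2,3,4\}$, which records the round that copy of $S$ is in; the letter $0$ serves as the counted payload, and $1$ and $5$ are separators and terminators.

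The doubling step is $S\ra 0\,3S\,3S$, applicable only if $4S$ is absent; it turns a copy marked $2$ into two copies marked $3$. The copy marked $2$ is identified by position: the left $2$ remains behind as a dead letter, while the new copies carry fresh markers. Advancing to the next round uses a rule $S\ra 5\,4S$ requiring that $2S$ be absent, so it can fire only once every copy of the old phase has been doubled. A rule $S\ra 1\,2S$ requiring that $3S$ be absent completes the cycle. Termination uses $S\ra\emptyword$, applicable only if $2S$ and $3S$ are absent, together with a final cleanup phase. The markers cycle $2\to3\to4\to2$, so every absence condition has length $2$ and refers to the phase that must be exhausted before the next one may start.

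The proof then has two parts. Soundness: $S\Ra^* w$ with $w\in R$ (where $R$ fixes the interleaving of phase letters) holds exactly when the number of $0$'s in $w$ is $2^n-1$ for some $n$, or an analogous rigid count. Completeness: the rounds can indeed be executed faithfully, which is routine.

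The main obstacle is soundness, that is, synchronisation. A single copy of $S$ might advance two phases while others lag, because an absence condition only forbids a pattern and does not force all copies to move together. Here the three-phase cycle does the work: a copy can leave phase $3$ only once no $2S$ remains, and it can re-enter phase $2$ only once no $3S$ remains. Any two live copies therefore differ by at most one phase, and the number of copies entering phase $2$ doubles at each cycle. I would establish this by an invariant on sentential forms: all $S$-markers lie in at most two consecutive phases of the cycle, and the number of copies in the earlier of these phases is determined by the round number. The choice of $R$ must exclude derivations that terminate some copies early, which I would handle by requiring in $R$ that the termination letter occurs only in a final block. If the counting of dead marker letters makes the intersection messy, I would instead apply a homomorphism erasing $1,2,3,4,5$ and keeping $0$, and argue non-context-freeness of the resulting unary non-semilinear set.
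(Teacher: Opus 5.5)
Your route differs from the paper's: you use an unbounded number of copies of $S$, synchronised by a three-phase absence cycle, to get exponential growth. The core can be made to work, but one point must be made explicit. In an SSCG a rule may rewrite \emph{any} occurrence of $S$, so ``identified by position'' is not a mechanism. For example, while no $4S$ is present, $S\ra 0\,3S\,3S$ may be applied to $3$-marked copies as often as one likes. This is repairable through $R$. Every non-erasing right-hand side begins with a terminal, so the adjacency (marker, first letter) survives into the terminal word. Forbidding $30$, $45$, $21$ in $R$ therefore enforces the intended discipline. Under it your invariant (live markers lie in two consecutive phases) holds, and each completed round doubles the copies.

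The genuine gap is termination, and it breaks soundness.

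First, ``$S\ra\emptyword$ if $2S$ and $3S$ are absent'' is not an SSCG rule, since a rule has only one forbidden string.

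Second, and more seriously, even this idealised rule does not make termination final. At the all-$4$ stage you may erase some copies and let the others continue with $S\ra 1\,2S$, which only needs $3S$ absent, and then double again:
\[ 0\,3\,5\,4S\,3\,5\,4S \Ra 0\,3\,5\,4S\,3\,5\,4 \Ra 0\,3\,5\,4\,1\,2S\,3\,5\,4 \Ra 0\,3\,5\,4\,1\,2\,0\,3S\,3S\,3\,5\,4 . \]
Keeping one copy per round yields any positive number of $0$'s. So the derivation trees range over all full binary trees, and the claim ``exactly $2^n-1$ zeros'' fails.

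Your proposed repair by $R$ (``termination letter only in a final block'') cannot work. The copies sit at the leaves of the derivation tree, which are scattered through the whole word between dead markers, so there is no final block. Moreover, a copy erased in round $j$ leaves the same local pattern as one erased in the last round. A regular set cannot compare leaf depths. So $L(G)\cap R$ may be the context-free set of encodings of all such trees, and the unary image you planned to use is $0^+$.

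What is missing is a mechanism that makes termination irrevocable: once any copy has been terminated, no copy may start another doubling round. Absence conditions on the markers of the \emph{current} copies cannot see copies that have already vanished, so this needs a new idea.

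The paper sidesteps the issue entirely. It keeps exactly two copies of $S$ and uses presence conditions on the other copy's context ($3S5$, $0S3$, $2S5$, $0S2$), which forces the four loop rules to alternate. Termination then happens at a fixed configuration: $S\ra 1$ needs $3S5$, and then $S\ra 4$ needs $012$. Intersecting with $0^+1(23)^+45^+$ gives $\set{0^n1(23)^n45^n}{n\text{ odd}}$.
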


\begin{proof}
The idea is to build a language $L$ such that
\begin{align}
	L \cap \{0\}^+ \{1\} \{23\}^+ \{4\} \{5\}^+ = \set{0^n 1 (23)^n 4 5^n}{n\text{ is odd}}. \tag{$*$} \label{cond}
\end{align}
Since context-free languages are closed under intersections with regular languages~\cite[p.~135]{HopUll79}, it follows from the pumping lemma for context-free languages that~$L$ is not context-free.

We define our SSCG~$G$ as follows:
\begin{center}
\setlength{\extrarowheight}{3pt}
\begin{tabular}{rc|c|c}
& Rule & Condition on Subword & Explanation \\
\hline
\rowgap{4}\\
\hline
(I) & $S \ra 0S23S5$ & $0$ absent & initialisation \\
\hline
(L1) & $S \ra 0S3$ & $3S5$ present & loop \\
(L2) & $S \ra 2S5$ & $0S3$ present & loop \\
(L3) & $S \ra 0S2$ & $2S5$ present & loop \\
(L4) & $S \ra 3S5$ & $0S2$ present & loop \\
\hline
(T1) & $S \ra 1$ & $3S5$ present & termination \\
(T2) & $S \ra 4$ & $012$ present & termination \\
\end{tabular}
\end{center}

\subsection*{Verification}
We verify that~$L(G)$ satisfies~(\ref{cond}).
It is easily seen that any generation of a word must begin with the initialisation step, as none of the other rules are applicable otherwise. It is now readily verified that the sequence
\[
	(\text{I}) - (\text{L}1) - (\text{L}2) - (\text{L}3) - (\text{L}4)
\]
generates the word
\[
	0^3 S (23)^3 S 5^3
\]
where each loop step is applied to that $S$-instance which does \emph{not} appear in the relevant side condition. By induction, the sequence
\[
	(\text{I}) - \big( (\text{L}1) - (\text{L}2) - (\text{L}3) - (\text{L}4) \big)^n - (\text{T}1) - (\text{T}2)
\]
generates the word
\[
	0^{2n+1} \,1 \, (23)^{2n+1} \, 4 \, 5^{2n+1}
\]
which proves the $\supseteq$-inclusion in~(\ref{cond}).

To prove equality, first note that the loop can only be carried out in order. Further, by a case analysis for all four loop stages, it is seen that an early termination, by applying (T1) and (T2) before completing a full loop, does not generate a word in~$\{0\}^+ \{1\} \{23\}^+ \{4\} \{5\}^+$. The same occurs in case that any (L$i$) or (T$i$) is applied to an $S$-instance different from the intended one. This completes the argument.
\end{proof}

By coding, the language~$L$ above can be constructed over a binary alphabet, but we note that that would require a larger window width for the required or forbidden words.

\subsection{Decidability of One-Non-Terminal SSCGs}
In contrast to the previous theorem, there is a clear limitation to the complexity of languages which can be generated by SSCGs with only one non-terminal.

\begin{thm}\label{thm-oneNT-linearspace}
If there is only one non-terminal~$S$ in a simple semi-conditional grammar then the language it generates is decidable, even in non-deterministic linear space.
\end{thm}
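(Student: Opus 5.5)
The plan is to give a nondeterministic linear-space procedure that, on input $w\in V_T^*$ with $|w|=n$, guesses a derivation $S\Ra^* w$ step by step. Since only one non-terminal is involved, a sentential form has the shape
\[
S^{k_0}\,a_1\,S^{k_1}\,a_2\cdots a_r\,S^{k_r}.
\]
Terminals are never rewritten or erased, so the terminal projection $a_1\cdots a_r$ of every sentential form in a successful derivation is a scattered subword of $w$. Indeed, the terminals of later forms are obtained by inserting letters into it. Hence $r\le n$, and the only source of unbounded size is the block lengths $k_0,\dots,k_r$. The whole proof reduces to storing these blocks in constant space each.

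Let $c$ exceed the maximal length of any presence or absence string of $G$ and the maximal length of any right-hand side. Whether a string of length $<c$ occurs as a subword of the sentential form depends only on the terminals, on the truncated lengths $\min(k_i,c)$, and on which blocks are empty (empty blocks make neighbouring terminals adjacent). I will therefore work with abstract configurations in which each $k_i$ is replaced by a value in $\{0,1,\dots,c-1,\omega\}$, where $\omega$ means ``$\ge c$''. Applicability of a rule is then decided exactly on the abstraction. A rewrite $S\to x$ applied inside a block is simulated by nondeterministically choosing where inside the block the rewritten $S$ sits and merging the pieces of $x$ into the neighbouring blocks. For a block of value $\omega$, the two remaining parts are guessed nondeterministically as any abstract values compatible with the total. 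In particular, a decrement may leave the block at $\omega$ or drop it to $c-1$. The machine stores at most $n+1$ abstract blocks of constant size plus at most $n$ terminals, and it checks after each insertion that the terminal projection is still a scattered subword of $w$. It accepts when the configuration equals $w$ with all blocks $0$. This uses $O(n)$ space, so $L(G)\in$ NLINSPACE, and decidability follows.

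Completeness is immediate: abstracting every concrete derivation step gives a legal abstract step. The main obstacle is soundness. An abstract run may, for instance, take a block from $\omega$ to $c-1$ by a single erasure, even though the concrete block was much longer, or split an $\omega$ block into two $\omega$ blocks. I will show by induction on the abstract run, processing it forward, that it can be concretised by inserting extra steps. Whenever the abstract run needs an $\omega$-block to shrink to a value below $c$, we repeatedly apply the same erasing or shortening rule to that block. During these repetitions the block stays $\ge c$, so the abstract configuration, and hence the applicability of that rule, is unchanged. A symmetric argument, choosing the position of the rewritten $S$ within a long block, realises any abstract split. For steps where the abstraction keeps a block at $\omega$, we simply perform the concrete step, since the concrete length remains $\ge c$ or can be pumped back up using the inserted repetitions. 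The delicate point to verify carefully is that the inserted repetitions never change the truth of a presence or absence condition used later. I expect this to follow from the choice of $c$, since every condition has length $<c$ and therefore cannot distinguish two blocks that are both of length $\ge c$.
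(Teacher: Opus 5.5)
Your proposal has a genuine gap. The abstraction $\{0,\dots,c-1,\omega\}$ is not sound, and the soundness argument fails at the step you call symmetric. An abstract split of an $\omega$-block into two bounded pieces need not be realisable, because replacing the exact block length by $\omega$ discards arithmetic information that later splits depend on.

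\textbf{Counterexample.} Take terminals $a,b$ and the rules
\begin{itemize}
\item[(i)] $S\to SSS$, with $a$ absent;
\item[(ii)] $S\to a$, with $a$ absent;
\item[(iii)] $S\to b$, with $a$ present.
\end{itemize}
Before the unique $a$ appears, only (i) and (ii) apply, so the single block always has odd length. After $a$ appears, only (iii) applies. Hence $L(G)=\{b^jab^k : j+k \text{ even}\}$.

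Now run your machine with any admissible $c$. It pumps the block to $\omega$ using (i) and applies (ii) inside it. It may then guess parts $j,k<c$ with $j+k\ge c-1$ and $j+k$ odd, which is compatible with a total $\ge c$. Finally (iii) converts every $S$, and the machine accepts $b^jab^k\notin L(G)$.

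None of your repair mechanisms applies here. There is no erasing or shortening rule that could bring the concrete block to the even length this split requires, and that length is unreachable anyway. After the $a$ appears no growth rule is applicable, so nothing can be ``pumped back up''. Splits of a short concrete block into two $\omega$ parts fail for the same reason whenever growth is no longer enabled. Thus a finite abstraction of block lengths over-approximates $L(G)$.

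\textbf{The missing idea.} You must keep block lengths exact and instead prove that some derivation uses few occurrences of $S$. This is what the paper does. \Cref{lem:good} shows that every $w\in L(G)$ of length $n$ has a derivation in which each sentential form contains at most $(m+p-1)(n+1)+pn$ occurrences of $S$. A nondeterministic machine can therefore store sentential forms verbatim in linear space.

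The bound comes from lazily re-simulating a given derivation $\sigma$ by a derivation $\tau$, as follows.
\begin{itemize}
\item Each block is compared with a threshold $m(j_s-j_{s-1})$, proportional to the number of terminals it must still produce.
\item Blocks below threshold are copied exactly.
\item Blocks above threshold may be shorter in $\tau$, but they stay above threshold.
\item Growth steps on blocks at or above threshold are skipped, and so are erasures where $\tau$'s block is already shorter.
\item Split points are chosen so that the bounded pieces match exactly.
\end{itemize}
Applicability is preserved by \Cref{lem:appl}, since all blocks that differ have length $\ge m$. Because $\tau$ is itself a genuine derivation, constraints such as parity are never violated. Your abstract runs cannot guarantee this.
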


To give the proof, we require an additional lemma. First, we fix some notation. Let $G$ be an SSCG of one non-terminal, which we denote by~$S$. Let $m$ be the length of the largest side condition. Let $p$ be the length of the largest output of any of the rules in $G$. Assume $p>1$, as grammars with $p = 1$ are trivial. Fix an alphabet $\Sigma$.

\begin{lem}\label[lem]{lem:appl}
	Suppose $u,v \in (\Sigma \cup \{ S \})^*$ and $n > m$. Then a production rule $\ell$ is applicable to~$wS^m u$ if and only if it is applicable to~$wS^n u$.
\end{lem}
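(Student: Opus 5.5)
The plan is to show that $wS^mu$ and $wS^nu$ have exactly the same subwords of length at most $m$. Since every side condition has length at most $m$, this settles the lemma. (The statement quantifies over $u,v$ but uses $w$; I read it as $w,u\in(\Sigma\cup\{S\})^*$.)

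Applicability of $\ell\colon(S\to x,\a,\b)$ depends on three things:
\begin{itemize}
\item whether $S$ occurs, which it does in both words because $m\ge1$;
\item whether $\a$ is a subword, when $\a\ne0$;
\item whether $\b$ is a subword, when $\b\ne0$.
\end{itemize}
Since $|\a|,|\b|\le m$, it suffices to prove that for every $y$ with $1\le|y|\le m$,
\[ y\in\sub(wS^mu)\iff y\in\sub(wS^nu). \]

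For the forward direction, $wS^mu$ is obtained from $wS^nu$ by deleting the block $S^{n-m}$. Look at an occurrence of $y$ in $wS^mu$ and how it meets the block $S^m$. If it lies entirely inside $w$ or entirely inside $u$, it survives unchanged in $wS^nu$. Otherwise it covers a contiguous piece of the block: $y=y_1S^ky_2$, where $y_1$ is a suffix of $w$, $y_2$ is a prefix of $u$, and $k\le m$. If $k<m$, then $y_1\neq\emptyword$ forces the piece to start at the left end of the block and $y_2\neq\emptyword$ forces it to end at the right end, and both cannot hold at once. So $y_1$ or $y_2$ is empty, and the same pattern sits inside $wS^nu$ with the $S^k$ taken from the left or right end of the longer block. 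If $k=m$, then $|y|\le m$ forces $y=S^m$, which occurs in $S^n$.

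For the converse, the argument is symmetric. An occurrence of $y$ in $wS^nu$ either avoids the block, or is $y_1S^ky_2$ with $k\le|y|\le m$. If both $y_1$ and $y_2$ are nonempty, the occurrence spans the whole block, so $k=n>m$, which is impossible. Hence one side is empty, and $S^k$ can be placed at the appropriate end of $S^m$, since $k\le m$.

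I expect no real obstacle. The only care needed is the boundary bookkeeping when $y$ straddles $w$ or $u$ together with the block of $S$'s. The key observation there is that an occurrence touching both sides must contain the whole block, which has length at least $m$, and this is only possible when $y=S^m$.
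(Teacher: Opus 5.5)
Your proposal is correct and follows the paper's proof. The paper's entire argument is the one-line observation that $wS^mu$ and $wS^nu$ have the same subwords of length at most $m$; your case analysis of how an occurrence meets the block of $S$'s supplies the bookkeeping behind it, and your remark that the presence of $S$ itself needs $m \ge 1$ makes explicit something the paper leaves implicit.
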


\begin{proof}
	The subwords of~$wS^mu$ and~$wS^nu$ of length at most~$m$ are the same.  
\end{proof}

The following lemma forms the core of the proof of \Cref{thm-oneNT-linearspace}.

\begin{lem}\label[lem]{lem:good}
	Suppose~$w \in \Sigma^n$ is generable from~$G$. Then there exists~$k \in \mathbb{N}$ such that~$w$ has a $G$-derivation~$\tau$ of length~$k$
\[
	\tau \colon \{ 0,1,\dots,k-1 \} \to (\Sigma \cup \{ S \})^*
\]
where~$\tau(i) \ra \tau(i+1)$ is one step of the derivation, and which satisfies
\[
\tau(0) = S \qquad \text{ and } \qquad \tau(k-1) = w
\]
such that, for every~$i < |\tau| = k$, we have
	\[
		|\set{a \in \tau(i)}{a = S}| \le (m+p-1)(n+1) + pn \, .
	\] 
\end{lem}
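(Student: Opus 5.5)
The plan is to start from an arbitrary $G$-derivation of $w$ and transform it into a new derivation in which no sentential form contains too many copies of $S$. Two observations drive this. First, terminals are never rewritten, so every sentential form of the derivation has the shape
\[
S^{r_0} a_1 S^{r_1} a_2 \cdots a_t S^{r_t}
\]
with $t \le n$ and $a_1\cdots a_t$ a scattered subword of $w$. Second, by \Cref{lem:appl}, a run of more than $m$ consecutive $S$'s is indistinguishable, for the purpose of applicability of rules, from a run of exactly $m$ of them. So long runs of $S$ can be shortened without changing which rules apply, which is what keeps the count of $S$ down.

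First I would classify occurrences of $S$ using the derivation tree. An occurrence of $S$ is \emph{essential} if some terminal of $w$ descends from it, and \emph{inessential} otherwise; an inessential $S$ is eventually erased completely. Distinct occurrences in one sentential form have disjoint descendant sets, so each sentential form contains at most $n$ essential $S$'s. When an essential $S$ is rewritten by a rule with right-hand side of length at most $p$, the newly created $S$'s lie next to it, and these contribute the $pn$ term of the bound. The remaining $S$'s are inessential and lie in at most $n+1$ maximal runs separated by terminals. The target is to show that each such run can be kept to length at most $m+p-1$, which gives the $(m+p-1)(n+1)$ term.

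Second, I would build the new derivation $\tau$ step by step alongside the original one, maintaining the following invariant. The current form of $\tau$ agrees with the current original form, except that every inessential run of length $r$ is replaced by a run of length $\min(r,\, c)$ for some $c$ with $m \le c \le m+p-1$. By \Cref{lem:appl}, the two forms then satisfy exactly the same presence and absence conditions.

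Each original step is translated as follows.
\begin{itemize}
\item If the step rewrites an essential $S$, or an $S$ in an inessential run that is still short, $\tau$ performs the same step.
\item If the step rewrites an $S$ inside a long inessential run and the run stays long afterwards, $\tau$ performs nothing, because the invariant is already preserved.
\item If the step causes a run to exceed $m+p-1$ in $\tau$, this can only be by at most $p-1$ new symbols. In that case the original derivation can afterwards be simulated while $\tau$ stays within the bound: subsequent steps creating further $S$'s in that run are simply skipped.
\end{itemize}
At the end both derivations reach $w$, since all inessential $S$'s disappear and the terminal positions are identical.

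The main obstacle I expect is the case where a long run shrinks back below $m$ in the original derivation. Then $\tau$ must reproduce the exact length, and its truncated run must be able to shrink in step. My approach would be to carry the truncated length as $\min(r, m+j)$ with slack $j < p$, and to argue that erasing steps in $\tau$ can be inserted using the very rules the original derivation used to shrink the run. Those rules are applicable in $\tau$ because the contexts agree up to length $m$. Handling the interaction of several rules of lengths up to $p$ at run boundaries, including a rule whose right-hand side contains terminals and thereby splits a run, is where I expect the careful case analysis to be needed.
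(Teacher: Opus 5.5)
\textbf{Verdict: genuine gap.} Splitting the occurrences of $S$ into essential and inessential ones is a different route from the paper's, but your translation of steps that rewrite an essential $S$ fails as stated, and with it the per-run window and the final count.

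You copy every such step into $\tau$. A copied step drops its inessential children into the neighbouring runs, including runs that are already truncated. Concretely, suppose a long run lies to the left of an essential $E$ and an empty run to its right. Let $\sigma$ repeatedly apply $S\to SSS$ to $E$ (outer children inessential, middle child essential) and then erase the new right-hand copy. The right-hand run is short, so $\tau$ must match it exactly and has to copy the step. Hence the truncated left run in $\tau$ grows by one per cycle, without bound.

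Your third bullet (skip later steps that feed an over-long run) cannot be used here. Skipping breaks the exact agreement of the short run on the other side of $E$, and that matters once the long run has shrunk and the whole $S$-block becomes short. It is also never usable for steps that create a terminal. For the same reason the window $m\le c\le m+p-1$ cannot be a per-run invariant. A terminal-producing step at the essential $S$ next to a run at the top of its window must be carried out in $\tau$ and pushes that run past $m+p-1$. Several such steps can feed the same run, e.g.\ repeated $S\to SSa$ at the essential $S$ just left of a growing block of terminals. Your $pn$ term does not cover this: it counts essential occurrences at one moment, not the copied steps over the whole derivation, whose deposits persist.

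\textbf{The missing idea} is the freedom the paper uses in its case~(3). All occurrences of $S$ are the same symbol, and an applicable rule may be applied to any of them. So which occurrences in $\tau$ count as essential is bookkeeping that you may change.
\begin{itemize}
\item In the example above, $\tau$ should do nothing and re-designate as essential the $S$ one place to the left, taken from the long run.
\item In general, a step creating no terminal can be absorbed this way whenever the adjacent long run has slack. Otherwise it can be copied at a cost that keeps the run below $m+p-1$.
\item The surplus left by the at most $n$ terminal-producing steps (at most $p-1$ each) is then bounded by an amortised count, not per run.
\end{itemize}

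\textbf{How the paper does it.} It never singles out individual occurrences. It treats each maximal $S$-block between consecutive terminals as a unit, with $r'_s\le r_s$ and equality below the threshold $m(j_s-j_{s-1})$. Any growth or erasure inside a long block is simply skipped, and a terminal-producing rule is applied at a position chosen so both halves stay above their thresholds. The price is that truncated blocks get split, so the thresholds must reserve room for every terminal a block will still receive. Your essential/inessential split removes that need and is worth keeping. In fact the paper's estimate (\ref{eq:lkj}) breaks exactly there: if $S\to aSb$ is applied in a block sitting at its threshold and the inner $S$ later yields many terminals, the inner block does not inherit the budget $m(j_b-j_a)$, which stays stranded in the outer blocks.

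\textbf{Your stated main obstacle is harmless.} Keep $\tau$'s run never longer than $\sigma$'s, and let $\tau$ ignore erasures in that run until the two lengths coincide. After that they move in lockstep; this is the paper's rule ``if $r_s>r'_s$ let $\tau(i+1)=\tau(i)$''. With this, no erasing steps ever need to be inserted.
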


Observe that this bound is linear in~$n$, and that~$m$ and~$p$ are fixed in~$G$.

\begin{proof}
	Fix some~$\sigma$ which generates~$w$ from~$G$. We define a new derivation sequence~$\tau$ which is as desired. By definition of grammars, we have~$\sigma(0) = \tau(0) = S$, and by the end we also have~$\sigma(|\sigma|-1) = \tau(|\tau| -1) = w$. In fact, we will impose that~$|\sigma| = |\tau|$, for ease of presentation. The reader may consider this to be an application of a new rule,~$S \ra S$. Of course, such an extended derivation sequence can later be shortened by removing the repeated steps.
	
	\medskip
	
	Fix~$i < |\sigma|$.
	
	\medskip
	
	For any occurrence of a symbol~$a \in \Sigma$~in $\sigma(i)$, by considering the remainder of the derivation sequence~$\sigma$ after step~$i$, we may associate this occurrence of~$a$ with a corresponding occurrence of~$a$ in~$w$. This gives us some $j < n$ with~$w(j) = a$, which is the occurrence’s \define{final position} in~$w$.
	Suppose that
	\begin{align}
		\sigma(i) = S^{r_0} a^i_0 S^{r_1} a^i_1 S^{r_2} \cdots S^{r_{k-1}} a^i_{k-1} S^{r_k} \label{seww}
	\end{align}
	whose sequence~$(r_s)$ depends on~$i$.
	Let~$j^i_s$ denote the final position of~$a^i_s$ in~$w$.
	\begin{figure}[htp]
		\begin{center}
		\setlength{\extrarowheight}{3pt}
		\begin{tabular}{c | p{0.35cm}ccccc}
			Stage && \multicolumn{5}{c}{$\sigma(i)$}\\
			\hline
			\rowgap{5}\\
			\hline
			$i-1$ & && $\cdots$ & $S$ & $\cdots$ \\
			$i$ & & & $\cdots$ & $a^i_s$ & $\cdots$ & \\
			$\vdots$ &&&& $\Big\downarrow$ &\\
			$|\sigma| - 1$ && $\cdots$ & $w(j^i_s - 1)$ & $a^i_s$ & $w(j^i_s + 1)$ & $\cdots$
		\end{tabular}
		\end{center}
		\caption{The final occurrence of some~$a^i_s$ in~$w$, which appeared in derivation step~$i$.}
	\end{figure}%

	Throughout the construction of~$\tau$, we will preserve the structure of the words~$\sigma(i)$ in~$\tau$: for~$\sigma$ as in~(\ref{seww}), we will define~$\tau$ so that
	\[
		\tau(i) = S^{r'_0} a_0^{i} S^{r'_1} a_1^{i} S^{r'_2} \cdots S^{r'_{k-1}} a_{k-1}^{i} S^{r'_k}
	\]
	whose sequence~$(r'_s)$ also depends on~$i$, and so that~$k \le n$.
	
	The remainder of the proof hinges on the structure of this derivation~$\tau$, which we study via the indices~$j^i_s$. For unity of presentation, it is helpful to define
	\begin{align*}
		j^i_{-1}  = -1 \quad \text{ and } \quad j^i_k = n\, .
	\end{align*}
	Our goal is to construct a derivation sequence~$\tau$ whose coefficients~$r'_s$ satisfy the following relation, defined inductively on~$i$ (to keep the notation moderately simple, we now drop the superscript~$i$). For every~$s < k$ (as per~(\ref{seww})):
	\begin{enumerate}
		\item[(I1)] If~$r_s < m(j_s - j_{s-1})$ then~$r_s' = r_s$.
		\item[(I2)] If~$r_s \ge m(j_s - j_{s-1})$ then~$r_s \ge r'_s \ge m(j_s - j_{s-1})$.
	\end{enumerate}
	In particular,
	\[
		r_s \ge r'_s \hspace*{1em} \text{for all~$s < k$}
	\]
	and
	\begin{align}
		\text{if } r_s > r_s' \quad \text{ then } \quad r_s > r_s' \ge m(j_s - j_{s-1}) \, . \label{eq:asttt}
	\end{align}
	We now construct the derivation sequence~$\tau$---and hence the sequence~$(r_s')$---from~$\sigma$ while maintaining the inductive hypothesis.
	
	\bigskip
	
	We begin with~$\tau(0) = \sigma(0) = S$, and clearly our inductive assumption is maintained.
	
	\bigskip

	Suppose~$\sigma(i+1)$ is given and~$\tau(i)$ already exists. Observe that, by induction and \Cref{lem:appl}, the production rules applicable to~$\sigma(i)$ are also applicable to~$\tau(i)$.
The following is readily verified.
	
	\begin{claim}
		The productions applicable to~$\sigma(i)$ are of one of the following three forms:
		\begin{itemize}
			\item $S \ra \emptyword$
			\item $S \ra S^t$ for some~$t > 0$
			\item $S \ra S^{t} u S^{t'}$ where~$t,t' \ge 0$ and~$u$ begins and ends with a non-terminal
		\end{itemize}
	\end{claim}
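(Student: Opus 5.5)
The claim is a purely syntactic statement about right-hand sides, so I would prove it by classifying the output word of an arbitrary production, without using the side conditions or the derivation $\sigma$ at all. Since $G$ has only the non-terminal $S$, every production (in particular every one applicable to $\sigma(i)$) has the form $\ell\colon (S\ra x,\a,\b)$ with $x\in(\Sigma\cup\{S\})^*$. Moreover $|x|\le p$ by the choice of $p$, although the claim does not need this bound.

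The plan is a three-way case split on $x$.
\begin{enumerate}
\item If $x=\emptyword$, the production is $S\ra\emptyword$.
\item If $x\neq\emptyword$ and $x$ contains no terminal, then $x\in\{S\}^+$. Hence $x=S^t$ with $t=|x|>0$.
\item Otherwise $x$ contains at least one symbol of $\Sigma$. Let $t\ge 0$ be the number of leading occurrences of $S$ in $x$, that is, the position of the first terminal. Let $t'\ge 0$ be the number of trailing occurrences of $S$, that is, $|x|$ minus one minus the position of the last terminal. Put $u$ equal to the subword of $x$ from the first terminal to the last terminal, inclusive. Then $x=S^tuS^{t'}$ holds by construction, and $u$ begins and ends with a symbol of $\Sigma$.
\end{enumerate}
These three cases are exhaustive and mutually exclusive, which proves the claim.

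There is one point of care. In the third form, the middle word $u$ must begin and end with a \emph{terminal}: it may contain occurrences of $S$ internally, but its first and last letters lie in $\Sigma$. I read the word ``non-terminal'' in the statement as a slip for ``terminal''. Under the literal reading the third case would overlap the second and would not cover outputs such as $S\ra a$, so the decomposition in case (3) is the intended one.

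I do not expect any real obstacle here. The only step that needs a moment's thought is choosing $t$ and $t'$ maximal, so that the decomposition in case (3) is unique. This uniqueness is what the subsequent construction of $\tau$ uses: it lets us say exactly which new terminals $a^{i+1}_s$ appear and which blocks $S^{r_s}$ are split or lengthened when the production is applied.
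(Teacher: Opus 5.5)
Your proof is correct and is the routine verification the paper has in mind; the paper gives no argument beyond calling the claim ``readily verified''. Your reading of ``non-terminal'' as a slip for ``terminal'' is also right: in case~(3) of the construction of~$\tau$, the paper itself explicitly assumes that $u$ begins and ends with terminal symbols.
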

	
	Using the claim, we now structure our inductive argument to construct~$\tau$ obeying the inductive hypotheses~(I1) and~(I2). Our action depends on the production~$\ell$ applied to~$\sigma(i)$:
	
	\begin{enumerate}
		\item \underline{$\ell = S \ra \emptyword$.} Suppose~$\ell$ is applied to the block~$a_{s-1} S^{r_{s}} a_s$ in~$\sigma(i)$. Thus,~$\sigma(i + 1)$ contains
		\[
			a_{s-1} S^{r_{s} - 1} a_s.
		\]
		\begin{itemize}
			\item If~$r_s = r'_s$, then apply~$\ell$ to~$\tau(i)$ in exactly the same way.
			\item If~$r_s > r'_s$, then let~$\tau(i+1) = \tau(i)$.
		\end{itemize}
		\item \underline{$\ell = S \ra S^t$.} Suppose~$\ell$ is applied to the block~$a_{s-1} S^{r_{s}} a_s$ in~$\sigma(i)$. Thus,~$\sigma(i + 1)$ contains
		\[
			a_{s-1} S^{r_{s} +t-1} a_s.
		\]
		\begin{itemize}
			\item If~$r_s = r'_s$ and~$r_s < m(j_s - j_{s-1})$, then apply~$\ell$ to~$\tau(i)$ in exactly the same way.
			\item Otherwise, let~$\tau(i+1) = \tau(i)$.
		\end{itemize}
		\item \underline{$\ell = S \ra S^t u S^{t'}$.} We assume that $u$ is a word beginning and ending with terminal symbols. We will generate~$\tau(i+1)$ by applying the same rule to~$\tau(i)$. Some care must be taken in choosing which~$S$ form~$\tau(i)$ to apply the rule to, since we must maintain the relationship between every~$r$ and~$r'$ at step~$i+1$.
		
		Suppose that~$\ell$ is applied to an instance of~$S$ in the block~$a_{s-1} S^{r_{s}} a_s$ in~$\sigma(i)$, which we may hence write as~$\alpha \, a_{s-1} S^{r_s} a_s \, \beta$ for some~$\alpha, \beta$. Thus,~$\sigma(i + 1)$ is of the form
		\[
			\alpha \, a_{s-1} S^q u S^{\tilde{q}} a_s \, \beta \, .
		\]
		Observe that~$q + \tilde{q} = r_s + t + t' - 1$.
		\begin{itemize}
			\item If~$r_s = r'_s$, then apply~$\ell$ to~$\tau(i)$ in exactly the same way as in~$\sigma(i)$ (i.e.~choose the same instance of~$S$ in~$\tau(i)$).
			\item If~$r_s > r'_s$ and~$q$ is sufficiently small so that case (I1) applies to its block inside~$\sigma(i+1)$, we choose an~$S$ in~$\tau(i)$ such that~$\tau(i+1)$ contains
			\[
				a_{s-1} S^{q} u S^{\tilde{q}'} a_s \, .
			\]
	We will necessarily have~$\tilde{q} \ge \tilde{q}'$ since~$r_s > r'_s$, but note that by additivity and the inductive assumption on~$r'_s$,~$\tilde{q}$ satisfies (I2) and $\tilde{q}'$ will be sufficiently large.
		\item If~$r'_s < r_s$ and $\tilde{q}$ is sufficiently small that it will be in case~(I1) for~$\sigma(i+1)$, we do as in the previous case, mutatis mutandis.
		\item If~$r'_s < r_s$ and~$q$ and~$\tilde{q}$ are both sufficiently large that they will be in case~(I2) for~$\sigma(i+1)$, we choose an~$S$ in~$\tau(i)$ such that~$\tau(i+1)$ contains the subword
		\[
			a_{s-1} S^{q'} u S^{\tilde{q}'} a_s
		\]
		where~$q' \le q$ and~$\tilde{q}' \le \tilde{q}$, and both are sufficiently large. This is possible by additivity and the inductive assumption on~$r'_s$.
		\end{itemize}
	\end{enumerate}
	
	\begin{claim}
		The inductive hypothesis on the various~$r'_s$ is maintained, for all~$i < |\sigma|$. 
	\end{claim}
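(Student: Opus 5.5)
We prove the claim by induction on $i$. At each step we check (I1) and (I2) for the block affected by the production $\ell$ applied to $\sigma(i)$. All other blocks keep both their $r_s$ and their $r'_s$, and their final positions $j_s$ do not change, so the hypothesis carries over to them unchanged. The base case $\tau(0)=\sigma(0)=S$ is trivial: there are no terminals, so $k=0$ and the only block $r_0=r'_0=1$ satisfies (I1) or (I2) at once. Two further points have to hold at every step. First, the move chosen for $\tau$ is legal. This follows because (I1) and (I2) give $r'_s=r_s$ or $r'_s,r_s\ge m(j_s-j_{s-1})\ge m$, and then \Cref{lem:appl} applies block by block. Second, the terminal skeletons of $\sigma(i+1)$ and $\tau(i+1)$ agree. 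This holds because $\tau$ either copies the rule or stalls, and it stalls only when $\ell$ inserts no terminal.

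We then go through the three cases of the construction. For $\ell=S\to\emptyword$: if $r_s=r'_s$, both values drop by one and we stay in the same regime, or pass from (I2) into (I1) with equality. If $r_s>r'_s$, then by (\ref{eq:asttt}) we have $r_s-1\ge r'_s\ge m(j_s-j_{s-1})$, so (I2) still holds. For $\ell=S\to S^t$: if we copy the rule, then $r_s=r'_s$ and both grow by $t-1$, so equality is kept. If we stall, then $r'_s$ was already at least $m(j_s-j_{s-1})$ and $r_s$ only grows, so (I2) is kept. In the third case the block $a_{s-1}S^{r_s}a_s$ splits into two blocks, bounded by $a_{s-1},u_{\mathrm{first}}$ and by $u_{\mathrm{last}},a_s$. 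The new final positions satisfy $j_{s-1}<j(u_{\mathrm{first}})\le j(u_{\mathrm{last}})<j_s$. Hence the two new thresholds $m(j(u_{\mathrm{first}})-j_{s-1})$ and $m(j_s-j(u_{\mathrm{last}}))$ sum to at most $m(j_s-j_{s-1})-m(|u|-1)$, and that is at most $r'_s$ when $r'_s<r_s$. This additivity is the key inequality.

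The main obstacle is the subcase $r'_s<r_s$, where we must split the $r'_s+t+t'-1$ copies of $S$ in $\tau$ into $q'+\tilde q'$. We need $q'\le q$ and $\tilde q'\le\tilde q$, and each piece must equal its $\sigma$-counterpart when that counterpart is below threshold, and reach the threshold otherwise. We take $q'=q$ whenever $q$ is below its threshold, and otherwise $q'=\max(\text{threshold},\,r'_s+t+t'-1-\tilde q)$; we choose $\tilde q'$ symmetrically. The additivity bound, together with $r'_s\ge m(j_s-j_{s-1})$, shows that the required sum fits. Since $q\ge q'$ and $q+\tilde q=r_s+t+t'-1>r'_s+t+t'-1$, the complementary piece is also dominated. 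The case where both $q$ and $\tilde q$ are below threshold cannot occur: their sum would then be less than $m(j_s-j_{s-1})\le r'_s<r_s\le q+\tilde q$. Finally, $k\le n$ at every stage because each terminal is eventually a distinct position of $w$. This completes the induction; summing $r'_s<m(j_s-j_{s-1})+p$ over the blocks afterwards gives the bound in \Cref{lem:good}.
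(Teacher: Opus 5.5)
You follow the paper's construction and make its case analysis explicit; the paper only says the claim is immediate. However, the inequality that carries the split case is off by one, and in that case the claim actually fails for the construction as given.

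When $r'_s<r_s$, the two outer blocks of $\tau(i+1)$ receive $r'_s+t+t'-1$ copies of $S$. The new thresholds $T_1=m(j(u_{\mathrm{first}})-j_{s-1})$ and $T_2=m(j_s-j(u_{\mathrm{last}}))$ sum to $m(j_s-j_{s-1})-m\bigl(j(u_{\mathrm{last}})-j(u_{\mathrm{first}})\bigr)$. For a rule $S\to a$ we have $t=t'=0$ and $u_{\mathrm{first}}=u_{\mathrm{last}}$. So ``the required sum fits'' needs $m(j_s-j_{s-1})\le r'_s-1$, but (I2) and your additivity bound only give $m(j_s-j_{s-1})\le r'_s$.

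This really happens. Take $m=1$ and $w=a$ (so $n=1$), with the derivation $\sigma\colon S\Ra SS\Ra SSS\Ra SaS\Ra aS\Ra a$, using $S\to SS$, $S\to a$ and $S\to\emptyword$ in presence of $a$.
\begin{itemize}
\item The initial threshold is $m(n+1)=2$.
\item $\tau$ copies the first step but must stall on the second, since $r_0=r'_0=2$ is not below $2$. So $\tau(2)=SS$ while $\sigma(2)=SSS$.
\item $S\to a$ then rewrites the middle $S$ of $\sigma(2)$, giving $q=\tilde q=1$. These equal the new thresholds $m(0-(-1))=1$ and $m(1-0)=1$.
\item Hence (I2) demands $q',\tilde q'\ge 1$ with $q'+\tilde q'=1$, which is impossible.
\end{itemize}
The paper's ``possible by additivity'' has the same hole, so the invariant itself must be changed.

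A repair is to use thresholds that lose at least one under splitting, e.g.\ $(m+1)(j_s-j_{s-1})-1$. These are still at least $m$, so \Cref{lem:appl} still transfers applicability. Writing $T_1,T_2$ for the new thresholds, a split now gives $T_1+T_2\le(m+1)(j_s-j_{s-1})-2\le r'_s-1\le r'_s+t+t'-1$. The thresholds still sum to at most $(m+1)(n+1)$ over all blocks.

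With this change your case analysis goes through, after some smaller fixes.
\begin{itemize}
\item Rewriting one $S$ of the $\tau$-block can only produce $q'\ge t$ and $\tilde q'\ge t'$. Your choice $q'=\max(T_1,\,r'_s+t+t'-1-\tilde q)$ can violate this, e.g.\ for a rule $S\to S^5a$. Take $q'=\max(T_1,\,t,\,r'_s+t+t'-1-\tilde q)$ and $\tilde q'=r'_s+t+t'-1-q'$; all constraints are then compatible.
\item $u$ may contain $S$, as in $S\to aSb$. So $|u|-1$ should be $j(u_{\mathrm{last}})-j(u_{\mathrm{first}})$, and you should mention the inner blocks of $u$, which are copied identically.
\item Your chain $r'_s<r_s\le q+\tilde q$ fails when $t+t'=0$, although $r'_s\le q+\tilde q$ is all you need there.
\end{itemize}

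Finally, your closing remark does not follow from the invariant. (I1) and (I2) give no per-block bound $r'_s<m(j_s-j_{s-1})+p$. A block copied exactly and then closed off by a terminal whose final position is adjacent to that of $a_{s-1}$ keeps all its $S$'s under a threshold of only $m$. Any upper bound on the number of $S$'s therefore needs a separate argument.
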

	
	\begin{cproof}
		This is immediate from the construction.
	\end{cproof}
	
	It follows that~$\tau(|\tau| - 1) = \sigma(|\sigma| - 1) = w$, and each~$\tau(i+1)$ is generated from~$\tau(i)$ by the application of a legal rule from~$G$ or by simply doing nothing, so~$\tau$ is a (generalised) derivation sequence for~$w$.
	
	With~$\tau$ constructed as above, the following claim gives the required bound.
	
	\begin{claim}
		For $\tau(i) = S^{r'_0} a_0 S^{r'_1} a_1 \cdots S^{r'_{k-1}} a_{k-1} S^{r'_k}$, we have
		\begin{align}
			\sum_{s = 0}^k \max \big( 0, r_s' - m(j_s - j_{s-1}) - p + 1 \big) \le kp \, . \label{eq:lkj}
		\end{align}
	\end{claim}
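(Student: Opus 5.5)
We prove the bound \eqref{eq:lkj} by induction on the step index $i$, in parallel with the construction of $\tau$. At $i=0$ we have $\tau(0)=S$, so $k=0$, $j_{-1}=-1$, $j_0=n$, and the single summand is $\max(0,1-m(n+1)-p+1)=0$. For the inductive step it is enough to control each block separately. We show that every block $S^{r'_s}$ that exceeds the threshold $m(j_s-j_{s-1})$ by more than $p-1$ can be charged to one of the terminal occurrences bounding it. Each charged occurrence absorbs at most $p$, and there are at most $k$ such occurrences.

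First, we note that the construction never lets a block that is already in case (I2) grow. In case (2), a rule $S\ra S^t$ is applied to $\tau(i)$ only when $r_s=r'_s<m(j_s-j_{s-1})$. So a block can cross the threshold through a single application of $S\ra S^t$ only from below, and it ends at most $t-1\le p-1$ above it. Hence that summand stays $0$. Rules $S\ra\emptyword$ only shrink blocks.

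The only remaining source of excess is case (3), an application of $S\ra S^tuS^{t'}$, which splits a block into two new blocks and creates new terminal occurrences (those of $u$). In the sub-cases $r_s>r'_s$ we choose where to apply the rule, and we keep $q'$ and $\tilde q'$ only as large as needed to remain in (I2). Any overshoot then comes from the $S^t,S^{t'}$ parts, of size at most $p$, together with the $S$'s inside $u$, again bounded by $p$. Each new block is adjacent to a freshly created terminal, the first or last letter of $u$, so we charge its overshoot beyond $p-1$ to that terminal. When $r_s=r'_s$ the old block was below threshold; its two pieces plus at most $p$ new $S$'s can exceed their new thresholds by at most $p$ total, and we charge this to the new terminals. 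Blocks strictly inside $u$ have $j_s-j_{s-1}\ge1$, so their threshold is at least $m\ge1$. Their length, at most $p$, therefore gives excess at most $p-m-p+1\le0$ beyond what is charged.

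The main obstacle is to make this charging stable under later steps. Later steps can delete or merge terminals' neighbourhoods, and the thresholds $m(j_s-j_{s-1})$ change when new terminals are inserted between old ones, which lowers the thresholds of the old blocks. We will handle this by observing that inserting terminals splits a block and its threshold additively: the sum of the new thresholds is the old threshold minus $m$ times the number of new positions, and the construction keeps each piece at exactly (or just above) its threshold, using the additivity remarks in case (3). So the total excess increases by at most $p$ per new terminal, and it never increases otherwise. Summing over the at most $k$ terminals present gives $\le kp$. The final count bound of the lemma then follows by summing $r'_s\le m(j_s-j_{s-1})+p-1+\text{excess}$ over $s$, which yields at most $m(n+1)+(p-1)(n+1)+pn$, using $k\le n$.
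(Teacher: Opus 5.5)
\textbf{Verdict: the proposal has a gap that cannot be closed, because the inequality is false for $\tau$ as the paper constructs it.}

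Your induction has the same shape as the paper's. The identity step, $S\to\varepsilon$, and a growth step $S\to S^t$ (which $\tau$ copies only below the threshold) create no new positive summand. A step introducing terminals is supposed to raise the sum by at most what $kp$ gains. Your treatment of the first three kinds of step is fine. The gap is the terminal-introducing step $S\to S^tuS^{t'}$.

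First, $r_s=r'_s$ does not mean the old block was below threshold: (I2) allows $r_s=r'_s\ge m(j_s-j_{s-1})$. More importantly, in that subcase the construction rewrites the same occurrence of $S$ as $\sigma$ does. So the sizes $q,\tilde q$ of the two outer pieces are dictated by $\sigma$, while their new thresholds depend on where the letters of $u$ end up in $w$, and nothing relates the two.

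Your ``additive splitting'' does not rescue this. Suppose the first and last letters of $u$ end at positions $j'\le j''$ of $w$. Then the outer thresholds $m(j'-j_{s-1})$ and $m(j_s-j'')$ sum to $m(j_s-j_{s-1})-m(j''-j')$. The blocks inside $u$ take the share $m(j''-j')$ while holding fewer than $p$ copies of $S$. Since $q'+\tilde q'=r'_s+t+t'-1$ is fixed, this also breaks the $r_s>r'_s$ subcases whenever $u$ contains $S$.

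Here is a concrete failure. Take the rules $S\to SS$, $S\to\varepsilon$, $S\to a$ without conditions, and $S\to c$ with presence string $a$. So $m=1$ and $p=2$; let $w=ac^{n-1}$.
\begin{itemize}
\item Let $\sigma$ grow $S$ to $S^n$. Every such step has $r_0=r'_0<m(n+1)$, so $\tau$ copies it.
\item Then $\sigma$ applies $S\to a$ to the second-to-last $S$, erases the resulting left block, and derives $c^{n-1}$ from the last $S$.
\item Since $r_0=r'_0$, $\tau$ copies the split, and $\tau(i+1)=S^{n-2}aS$ satisfies (I1) and (I2).
\item But $j_0=0$, so the first summand equals $(n-2)-1-p+1=n-4$, which exceeds $kp=2$ once $n\ge 7$.
\end{itemize}

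Hence no charging scheme can prove the inequality for $\tau$ as constructed. The paper's one-line justification, that the sum grows by at most $t+t'$, misses exactly this subcase: here $t+t'=0$. The problem also reaches the lemma itself. Repeat the split inside the trailing block for $w=a^{n-1}c$: grow it to just below its threshold, then split off the next $a$ at its second-to-last $S$. The copied $\tau$ then contains about $n^2/2$ occurrences of $S$. So the repair has to go into the definition of $\tau$, not the bookkeeping: $\tau$ must be free not to copy growth of $\sigma$ that ends up stranded in a block with a small threshold.
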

	
	\begin{cproof}
		Since~$\tau$ is constructed by induction on~$i$, we prove this claim also by induction. Note that~(\ref{eq:lkj}) holds for~$i=0$ since we postulated that~$\sigma(0) = \tau(0) = S$.
		
		If~$\tau(i)$ and~$\tau(i+1)$ have the same terminal symbols, then either~$\tau(i+1) = \tau(i)$, or~$\tau(i+1)$ was formed from~$\tau(i)$ by applying a rule of the form~$S \ra \emptyword$, or~$\tau(i+1)$ was formed from~$\tau(i)$ by applying a rule of the form~$S \ra S^t$ with~$t \le p$.
		
		In the first two cases, the inequality is clearly preserved from~$\tau(i)$ to~$\tau(i+1)$.
		
In the last case, note that by construction, the~$r'_s$ from~$\tau(i)$ to which we are applying the rule must have
\[
	r'_s < m(j_s - j_{s-1}) \, .
\]
Then, the corresponding~$r'_s$ in~$\tau(i+1)$ will have
\[
	r'_s < m(j_s - j_{s-1}) + p
\]
and so it will not participate in the sum for~$\tau(i+1)$. Thus, the non-zero summands for~$\tau(i+1)$ are the same as those for~$\tau(i)$, meaning the inequality is preserved.

If~$\tau(i+1)$ was formed from~$\tau(i)$ by the application of a rule of the form~$S \ra S^{t} u S^{t'}$, then consideration of the subcases will reveal that the total amount by which the sum has increased is bounded above by~$ t + t' < p$. Since~$\tau(i+1)$ contains at least one more terminal symbol than~$\tau(i)$, the inequality is preserved.
	\end{cproof}
	
	Since~$\sum_{s = 0}^k (j_s - j_{s-1}) = - j_{-1} + j_k = n+1$, (\ref{eq:lkj}) implies
	\[
		\sum_{s=0}^k r'_s \le m(n+1) + (k+1)(p-1) + kp \, . \qedhere
	\]
\end{proof}

\Cref{thm-oneNT-linearspace} now follows immediately: by \Cref{lem:good}, to determine membership in the language generated by~$G$, a search over derivation sequences bounded linearly in the number of non-terminals is sufficient. This completes the proof of \Cref{thm-oneNT-linearspace}.

\begin{question}
	Can \Cref{thm-oneNT-linearspace} be improved, either by weakening hypotheses or by strengthening the complexity conclusion?
\end{question}

\section{Membership Testing of SSCGs Is $\NP$-Hard}
The next result shows that testing membership in SSCG languages is $\NP$-hard even for SSCGs with only two non-terminals.

\begin{thm}\label{thm-nphardness}
The following problem is $\NP$-hard.
\begin{quote}
	Given a simple semi-conditional grammar~$G$ of two non-terminals, $2n+4$ terminals,~$O(n)$ rules, and an~$O(n^3)$ length word~$u$, can~$G$ generate~$u$?
	\end{quote}
If the given grammar~$G$ produces with every rule application at least one terminal, then the problem is $\NP$-complete. 
\end{thm}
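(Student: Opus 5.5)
We plan a polynomial-time reduction from 3-SAT (or a variant such as exact 3-SAT) to the membership problem. The parameters in the statement suggest the shape of the construction. There are $n$ variables $x_1,\dots,x_n$; the $2n$ terminals encode the literals, say $p_k$ for $x_k$ and $q_k$ for $\neg x_k$, and the four extra terminals serve as separators and markers. The target word $u$ has length $O(n^3)$, which fits the $O(n^3)$ clauses a 3-CNF over $n$ variables can have without repetition. We would first normalise the formula so that it has exactly the distinct clauses present, with $m=O(n^3)$.

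\emph{Phase 1: choosing an assignment.} Use the two non-terminals $S,T$. For each $k$ there are two absence rules for $S$, one producing $p_k$ and one producing $q_k$. Both are forbidden once either literal of $x_k$ already occurs; since absence strings are single words, we would encode this by making each variable's choice sequential and guarded by a marker terminal, so that exactly one of $p_k,q_k$ is written. The result is a prefix of $u$ that is fixed up to the choice, for instance the $k$-th block is a separator followed by one literal. Because $u$ is fixed, this phase must produce a word-independent layout, so we would instead let $S$ produce the assignment as ``unused'' letters that are later consumed, or let the choice be recorded only through which rules become applicable.

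\emph{Phase 2: checking clauses.} For each clause $C_j$ the word $u$ contains a block $c\, \ell_{j,1}\ell_{j,2}\ell_{j,3}\, d$ listing its literals. The non-terminal $T$ generates these blocks one after another with unconditional rules, but the rule that closes a block requires, as a presence condition, a marker subword that witnesses a true literal. We would encode this as the literal's assignment token adjacent to a fixed marker, which gives $O(n)$ such rules, one per literal. The key point is that presence conditions look at the whole sentential form, so a true literal recorded anywhere, in particular in the assignment area, can license a clause. The clause word $u$ is laid out deterministically, so the only freedom in the grammar is the assignment. Hence $u\in L(G)$ if and only if the formula is satisfiable.

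The main obstacle is reconciling a \emph{fixed} target word with a \emph{variable} assignment. The assignment must leave a trace that presence conditions can read while the clauses are being checked, yet that trace must be erased or normalised by the end so that every satisfying assignment yields the same $u$. We would try to keep the assignment in the non-terminal pattern rather than in terminals, for example a block $S T$ versus $T S$ next to the separator for $x_k$, which is read by presence strings such as $c_k S T$. It is then deleted at the end by absence rules $T\to\emptyword$ and $S\to\emptyword$ that fire only after the final marker is produced. This uses only two non-terminals and counts $O(n)$ rules as claimed. Soundness, namely that no unintended interleaving produces $u$ from an unsatisfying assignment, needs a careful invariant argument in the style of the verification of Theorem~\ref{thm-oneNT-nonCF}.

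For NP-membership under the hypothesis that every rule application emits a terminal, any derivation of $u$ has at most $|u|$ steps. Each sentential form then has polynomial length, so guessing the derivation and checking the side conditions gives an NP certificate, and NP-completeness follows.
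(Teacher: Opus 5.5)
Your Phase~2 is unsound as stated, and it is the core of the reduction. A presence condition is a single contiguous subword, and your clause blocks lie away from the assignment area. So a closing rule enabled by ``the token saying $\ell$ is true'' (one rule per literal $\ell$) never checks that $\ell$ occurs in the block being closed. Under every assignment one of $x_k,\neg x_k$ is true, so some closing rule is always enabled and every block can be closed. Unsatisfiable formulas such as $x_1\wedge\neg x_1$ are therefore accepted; your remark that ``a true literal recorded anywhere can license a clause'' is exactly the bug. Two things are needed. The check must be attached to the rule that \emph{writes} a literal into the clause. And each clause needs a ``satisfied'' bit that leaves no trace in the final word. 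The paper does this with two copies of $T$ per clause:
\begin{itemize}
\item (I1) creates $TT\spp$.
\item The rules $T\to x_kT$ and $T\to y_kT$ write a literal of the current variable without consuming anything.
\item The rules $T\to x_k$ and $T\to y_k$ write a literal \emph{and} delete one $T$. They are enabled only by the presence strings $x_kSS\stpp$ and $x_kS\stpp$ respectively, i.e.\ only when that literal is true.
\item The surviving $T$ becomes $\sep$ via (T4) once $S$ is gone.
\end{itemize}
A correctly formatted block is then $w\,\sep\spp$ with exactly one consumption, whichever literal of $w$ served as witness. Hence $u_P$ does not depend on the assignment.

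The assignment side is not settled either. You move between three designs. In the last one, tokens $ST$ versus $TS$ persist, $T$ also drives the clause blocks through unconditional rules, and all non-terminals are erased at the end. Nothing there prevents a token from being altered or duplicated mid-derivation at no cost to the final word, and you defer exactly this to an unspecified invariant. The paper sidesteps the problem by fixing values one variable at a time. The tail $x_1\cdots x_kS\stpp$ versus $x_1\cdots x_kSS\stpp$ encodes only the current variable, and (L1)/(L2) overwrite it when passing to $x_{k+1}$. Literals of $x_k$ can be written only while that tail is present. Finally, (T1)--(T3) normalise the tail to $x_1\cdots x_{n+1}\fin\sep\stpp$ whatever the assignment was.

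There is also a gap in the NP-completeness clause: membership in NP is not enough. You also need hardness for the restricted class of grammars in which every rule emits a terminal, and your erasing rules $S\to\emptyword$, $T\to\emptyword$ put your grammars outside that class. The paper uses the terminal $\sep$ instead of $\emptyword$ precisely so that its reduction stays inside it.
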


\begin{rem}
This hardness is present even in the absence of $\emptyword$-rules in the grammar. If, furthermore, the grammar contains at least one terminal on the right-hand side of every rule, then the problem is $\NP$-complete. The below proof satisfies this hardness relation; the $\NP$-completeness part of the claim follows from the fact that one can guess the derivation. Further, the number of steps is bounded by the length of the input word. All conditions, as well as the applicability of rules, can be checked in polynomial time.
\end{rem}

\begin{proof}[Proof of~\Cref{thm-nphardness}]
We reduce SAT for CNF formulas to the problem stated in the theorem, thus proving $\NP$-hardness.

\medskip

Fix~$n$.

\medskip

Below, we define an SSCG~$G$ with the following two properties:
\begin{enumerate}
	\item For every SAT formula~$P$ there exists a word~$u_P \in L(G)$ which uniquely codes~$P$ in~$L(G)$.
	\item Every SAT formula~$P$ which is coded in~$L(G)$ is satisfiable. 
\end{enumerate}
The SSCG~$G$ will also generate words which do not code SAT formulas as per our coding above. Hence, we denote the set of \define{coded formulas} in~$L(G)$ by~$\tilde{L}(G)$, and we note that
\[
	\tilde{L}(G) \subset L(G) \, .
\]
This is not an issue to prove $\NP$-hardness, though, since the sublanguage of all syntactically correctly coded SAT formulas in~$L(G)$ is regular, as the following claim shows:

\begin{claim}\label{clmm1}
	$\tilde{L}(G) = L(G) \cap R$ for some regular language~$R$.
	In other words, there exists a regular expression which separates the syntactically correct from the syntactically incorrect instances.
\end{claim}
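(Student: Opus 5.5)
Since the grammar $G$ has not yet been written down at this point, the claim is a design constraint: the coding of CNF formulas must be syntactically simple enough that a regular language captures exactly the well-formed codes. I would fix the coding first and derive $R$ from it.

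\emph{The coding.} With $n$ variables, use terminals $x_1,\dots,x_n$ for positive literals and $\e{x}_1,\dots,\e{x}_n$ for negative literals, giving $2n$ terminals. Add four auxiliary terminals: a clause separator, a begin marker, an end marker, and a marker for the assignment part. This accounts for the $2n+4$ terminals in the statement. A formula $P$ with clauses $C_1,\dots,C_r$ is coded as $u_P$: the begin marker, then an assignment block in a fixed shape, then each clause written as a string of literal symbols and closed by the separator, then the end marker.

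\emph{Defining $R$.} Let $R$ be the set of all words of exactly this shape. Its regular expression is a fixed prefix for the assignment region, followed by $\bigl(\{x_1,\dots,x_n,\e{x}_1,\dots,\e{x}_n\}^+\,\sep\bigr)^*$, followed by the end marker. A finite automaton only has to check the order of the markers, that no block is empty, and that literal symbols occur only inside clause blocks. No counting is needed, because consistency with the guessed assignment is enforced by $G$ and not by $R$.

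\emph{Proving $\tilde{L}(G)=L(G)\cap R$.} The inclusion $\tilde{L}(G)\subseteq L(G)\cap R$ holds by definition, since coded formulas are words in $R$. For the converse, I would show that every word of $L(G)$ with the marker structure of $R$ determines a unique formula $P$ by reading off its clause blocks, and that this word is $u_P$. Here I would use the fact that $G$ only ever emits literal symbols between separators.

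\emph{Main obstacle.} The hard step is making sure that $G$ cannot produce a word which lies in $R$ but was derived by a ``cheating'' derivation, for example one that interleaves assignment choices inside clauses. I would handle this by letting the non-terminals $S,T$ control phases through absence/presence conditions, in the style of \Cref{thm-linear}: $T$ is present only during the assignment phase and $S$ only during the clause phase. I would then check that any phase violation leaves a marker pattern that $R$ rejects, so that it can be filtered out by intersecting with $R$.
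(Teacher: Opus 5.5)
Your abstract argument matches the paper's one-line proof. You take $R$ to be the set of well-formed codes, note that it is regular for fixed $n$, and then $\tilde{L}(G)=L(G)\cap R$ follows directly from the definition of $\tilde{L}(G)$ as the coded formulas in $L(G)$.

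The gap is that you apply this to a coding you invented. The claim is about the specific grammar $G$ of Table~\ref{tb:TT} and the coding fixed right after the claim. That coding has no begin marker and no leading assignment block. A coded formula is a sequence of clause blocks, each a string of literal symbols closed by $\sep\spp$, followed by the fixed tail $x_1x_2\cdots x_{n+1}\fin\sep\stpp$. An example is $x_1x_3\sep\spp x_2y_3\sep\spp x_1x_2x_3x_4\fin\sep\stpp$. Such words are not in your $R$, so your $R$ does not give the claimed equality for this $G$. You need $R$ read off the actual coding, for instance $R=\bigl(\{x_1,\dots,x_n,y_1,\dots,y_n\}^{+}\sep\spp\bigr)^{*}\,x_1x_2\cdots x_{n+1}\fin\sep\stpp$. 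Nothing further then has to be proved.

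Your ``main obstacle'' is misplaced. Membership in $\tilde{L}(G)$ depends only on the word, not on how it was derived, so cheating derivations play no role in this claim. Whether a well-formed word could code an unsatisfiable formula is the content of the next claim.

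Your remark that consistency is enforced by $G$ and not by $R$ is the reverse of how the paper's reduction works. $G$ also derives words in which some clause keeps both of its $T$'s, and these produce two $\sep$'s in that block. It is exactly the requirement in $R$ of a single $\sep$ per block that discards them, i.e.\ that encodes ``every clause was satisfied''.

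Finally, separating $S$ and $T$ into phases would break the paper's mechanism. Rules (L3) and (L4) rewrite the clause non-terminals $T$ under presence of $x_kS\stpp$ or $x_kSS\stpp$. So $S$ and $T$ must coexist for the clauses to read the current assignment.
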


\begin{cproof}
	With~$n$ fixed, a suitable regular expression is readily established via the coding given below.
\end{cproof}
To explain the coding, we first define the language. Denote the terminals by
\[
	x_1,\ldots,x_{n+1}, y_1,\ldots,y_{n+1},\sep,\spp,\fin,\stpp
\]
Each SAT instance is made up of clauses, which are each followed by a separator pair~$\gapp$ (the symbols~$\sep$ avoid~$\emptyword$ appearing on the right-hand side of productions). The instance itself is concluded by the pair~$\stopp$. The end of all variable listings is highlighted by the symbol~$\fin$. Further, we use the following conventions to express SAT formulas:
\begin{itemize}
	\item All literals are represented by a single terminal, and the clauses $x_i \vee y_i$ are omitted.
	\item The disjunctions between literals are omitted.
	\item For all~$k$, we define~$y_k = \lnot \, x_k$.
	\item Every instance is concluded by a listing of all (positive) variables~$x_1,\dots,x_{n+1}$.
\end{itemize}

As a result of our coding (see \Cref{tb:TT} below),~$x_{n+1}$ cannot be incorporated into any SAT formula, but is instead required to terminate. Hence, we use~$n + 1$ many variables to code~$n$-variable SAT formulas.
For example, a typical instance of a coded 3-variable SAT formula (i.e. with~$n=3$) is given by
\[
   x_1x_3\gapp x_2y_3\gapp x_1x_2x_3x_4 \, \fin \stopp
\]
(note the presence of~$x_4$) which codes the SAT formula
\[
	(x_1 \lor x_3) \land (x_2 \lor \lnot x_3) \, . 
\]
If the separator policy is violated, the formula is incorrect.

\medskip

In \Cref{tb:TT} below, we construct the required SSCG. Since the number of variables in our SAT formulation is fixed as~$n$, the SSCG's loop rules are valid for~$k=1,2,\ldots,n+1$.
\begin{center}
\setlength{\extrarowheight}{3pt}
\begin{table}
\begin{tabular}{rc|c|c}
& Rule & Condition on Subword & Explanation \\
\hline
\rowgap{3}\\
\hline
(I1) & $S \ra TT \, \spp \, S$ & $x_1$ absent & initialisation \\
(I2) & $S \ra x_1S \stpp$  & $x_1$ absent & initialisation \\
(I3) & $S \ra x_1SS \stpp$  & $x_1$ absent & initialisation \\
\hline
(L1) & $S \ra x_kS \upgap x_kSS$ & $x_{k-1}S \stpp$ present & loop \\
(L2) & $S \ra x_k \upgap x_kS$ & $x_{k-1}SS \stpp$ present & loop \\
(L3) & $T \ra y_k T \upgap y_k \upgap x_k T$ & $x_kS \stpp$ present & loop \\
(L4) & $T \ra x_k T \upgap x_k \upgap y_k T$ & $x_kSS \stpp$ present & loop \\
\hline
(T1) & $S \ra \fin$ & $x_{n+1}SS \stpp$ present & termination \\
(T2) & $S \ra \fin S$ & $x_{n+1}S\stpp$ present & termination \\
(T3) & $S \ra \sep$ & $\fin S$ present & termination \\
(T4) & $T \ra \sep$ & $S$ absent & termination \\
\end{tabular}
\vspace*{1em}
\caption{We indicate the options in rules~(L$i$) by~(L$i$-$j$); for instance, the rule~(L1-2) is~``$S \ra x_kSS$''.}
\label{tb:TT}
\end{table}
\end{center}

\subsection*{Explanation}
Suppose~$P$ is a SAT formula. We show how to generate~$u_P$ in~$G$.

First, rule~(I1) is applied as many times as needed to generate the required number of clauses in~$P$. We call these the \define{coded clauses}. For instance, we can generate four coded clauses by applying~(I1) four times, yielding
\[
	S \Ra^* TT \, \spp \, TT \, \spp \, TT \, \spp \, TT \, \spp \, S \, .
\]

During generation, a double~$TT$ indicates a clause not yet satisfied, while a single~$T$ indicates a clause which is satisfied. New variables are instantiated once, either \emph{positively} as in~(L1-2) and~(L2-2); or \emph{negatively}, as in~(L1-1) and~(L2-1). Whenever a positively instantiated variable is included in a clause as a positive, then we use~(L4-2) to do so. This eliminates one~$T$-instance in the clause, hence coding that this clause is satisfied. We do the same for negatively instantiated variables which appear as a negative, this time using~(L3-2). Observe that these are the only ways to eliminate the double~$TT$ in a syntactically correct word.

We can now deduce a satisfying valuation for~$P$ from the derivation sequence for~$u_P$, by specifying for~$k = 1,2,\dots,n$:
\begin{equation}\label{eq:split}
	\begin{split}
		x_k = 1 &\iff x_k SS \stpp \text{ appears in the derivation}\\
		y_k = 1 &\iff x_k S \stpp \text{ appears in the derivation}
	\end{split}
\end{equation}
It is verified by construction that no word which codes a SAT formula can contain both~$S \stpp$ and ~$SS\stpp$. Hence, this defines a unique valuation map---but this does not yet show that this valuation map is a satisfying assignment. This is what we establish now.

\begin{claim}
	$\tilde{L}(G) = \set{u_P}{P \text{ is satisfiable}}$, or in other words:
	\begin{enumerate}
		\item All syntactically correct instances are satisfiable with respect to at least one variable assignment.
		\item Unsatisfiable instances cannot be generated, unless they are syntactically incorrect.
	\end{enumerate}
\end{claim}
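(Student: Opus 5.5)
We prove the two inclusions separately, working throughout with the grammar of \Cref{tb:TT} and with the regular filter $R$ of \Cref{clmm1}.

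For ``$\supseteq$'', fix a satisfiable formula $P$ with clauses $C_1,\dots,C_r$ over $x_1,\dots,x_n$ and a satisfying assignment. The derivation proceeds in stages.
\begin{enumerate}
\item Apply (I1) $r$ times to get $TT\spp\cdots TT\spp S$.
\item Apply (I2) or (I3) according to the truth value of $x_1$: $x_1S\stpp$ codes $x_1$ false, $x_1SS\stpp$ codes $x_1$ true, matching (\ref{eq:split}).
\item While the marker $x_kS\stpp$ or $x_kSS\stpp$ is present, process every clause block containing $x_k$ or $y_k$ using (L3)/(L4).
\begin{itemize}
\item For a literal that is true under the assignment, use the option (L3-2) or (L4-2). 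This consumes one $T$ and marks the clause satisfied; we do this for exactly one true literal per clause.
\item For every other occurrence, use the $T$-preserving options (L3-1), (L3-3), (L4-1), (L4-3).
\end{itemize}
\item Advance the marker to $k+1$ with (L1)/(L2), choosing the option according to the value of $x_{k+1}$.
\item After $x_{n+1}$ is written, terminate with (T1)/(T2) and then (T3). This removes every $S$.
\item Finally, every remaining $T$, now exactly one per clause, becomes $\sep$ by (T4).
\end{enumerate}
The result is $u_P$. I would check that the literals in each clause appear in increasing index order, as the coding requires, since the marker index only increases.

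For ``$\subseteq$'', take a derivation of some $w\in L(G)\cap R$ and read off the valuation from (\ref{eq:split}). The main points to establish are as follows.
\begin{enumerate}
\item The $S$-part behaves like a counter. Each (L$i$) rule is enabled only by the presence string $x_{k-1}S\stpp$ or $x_{k-1}SS\stpp$, so the index $k$ advances monotonically. The sentential form contains exactly one of $x_kS\stpp$, $x_kSS\stpp$ at each time, because the rules rewrite the $S$ adjacent to $\stpp$ and the options (L1-2), (L2-1) fix the parity. Derivations that apply loop rules to other occurrences of $S$ must be shown to leave a trace violating $R$; I would argue this by inspecting which words the shapes $x_kS\upgap x_kSS$ can leave before $\stpp$.
\item While the $S$-marker is at index $k$, each (L3)/(L4) application writes a terminal with index $k$. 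A $T$ can be \emph{removed} only through (L3-2), writing $y_k$, while the marker says $x_k$ is false, or through (L4-2), writing $x_k$, while the marker says $x_k$ is true. In both cases the written literal is true under the valuation.
\item By the syntax enforced by $R$, each $\spp$-separated clause must end with a single $\sep$ before $\spp$. Hence exactly one of its two $T$s was deleted by a true-literal step, and the other survived to (T4). Here (T4) requires $S$ absent, so it fires only after all $S$ are gone, and no further literal can then be added. So every clause contains a true literal.
\item The valuation is well defined, since the marker of index $k$ is fixed once $x_k$ is written.
\end{enumerate}

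The main obstacle is the robustness part of step~1: controlling rule applications to unintended occurrences of $S$ or $T$, including the initialisation rules firing after looping, the side conditions being only local, and an extra $S$ from (I3) being used elsewhere. These must be shown to yield words outside $R$. I would handle this by a case analysis on the possible positions of $S$. An invariant helps: the sentential form always has the shape (clause blocks)$\,\spp\,x_1\cdots x_k\,S^{\{1,2\}}\stpp$, with every clause block over $\{x_j,y_j\}_{j\le k}\cup\{T\}$ and containing at most two $T$s. Any deviation either halts, or produces an $x$ outside the final listing or a malformed separator, which $R$ rejects.
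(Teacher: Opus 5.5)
Your proposal is correct and is essentially the paper's argument: you read the valuation off the markers in (\ref{eq:split}), note that a $T$ can vanish before (T4) only via (L3-2)/(L4-2) writing a true literal, use the single $\sep$ per block demanded by $R$ to force exactly one such step per clause, and spend one true literal per clause for the converse. Your invariant $x_1\cdots x_k S^{\{1,2\}}\stpp$ in fact supplies the well-definedness of the valuation that the paper only asserts; it holds because every $S$ leaves at least one terminal in place, while $R$ demands the listing $x_1\cdots x_{n+1}\fin\sep$ exactly once and the presence conditions produce it left to right.

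Two details need fixing when you write it out. First, (L2) and (T1) must be applied to the \emph{left} $S$ of $SS\stpp$, not to the one adjacent to $\stpp$ as your first point for $\subseteq$ says; that choice either blocks the derivation or strands an $S$ inside the listing. Second, blocks of the form $\ell_1\cdots\ell_m\sep$ with increasing indices require always rewriting the leftmost $T$ of the block, and this does not follow from the marker index increasing alone.
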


\begin{cproof}
	We prove~(2) first. Suppose~$P$ is coded in~$\tilde{L}(G)$ via the word~$u_P$. We show that~$P$ is satisfiable. In particular,~$P$ will be satisfied by the satisfaction definition as given in~(\ref{eq:split}) above, denoted by~$v$. To see this, suppose there exists a clause in~$P$ not satisfied by~$v$, which we assume w.l.o.g.\ to be of the form
	\begin{align}
		x_1 \lor x_2 \lor x_3 \, . \label{eq:sat}
	\end{align}
	Since~$v$ does not satisfy the clause, the derivation of~$u_P$ must contain the words
	\[
		x_1 S \stpp \, , \hspace*{2em} x_2 S \stpp \, , \hspace*{2em} \text{and} \hspace*{2em}  x_3 S \stpp \, .
	\]
	By the rules of the grammar, the only way in which~$v$ could not possibly satisfy~(\ref{eq:sat}) is by applying the third option in rule~(L3) for all three variables~$x_1$, $x_2$, and $x_3$. But then, the following derivation of subwords inside~$u_P$ must be present:
	\[
		TT \spp \Ra x_1TT \spp \Ra x_1x_2TT \spp \Ra x_1x_2x_3TT \spp
	\]
	This derivation cannot be concluded to yield a syntactically correctly coded SAT formula, which contradicts the fact that~$u_P \in \tilde{L}(G)$. The same argument applies for longer clauses, or those including both positive and negative variables.
	
	\medskip

	To prove~(1), we show that every satisfiable SAT formula in~$n$ variables can be generated. First, observe that for any SAT formula~$P$:
	\begin{equation}\label{eq:coded}
		\begin{gathered}
	 	\text{$P$ is coded in~$\tilde{L}(G)$}\\
		\iff\\
		\text{In each coded clause~$C$ in~$P$, one instance of~$T$ is eliminated}\\
		\iff\\
		\text{$P$ has a satisfying valuation}
		\end{gathered}
	 \end{equation}
	 Now, let~$P$ be a SAT formula of~$k$ clauses
	\begin{align*}
		P &= \bigwedge_{i < k} C_i
	\intertext{where each clause~$C_i$ has~$m_i$ variables}
		C_i &= \bigvee_{j < m_i} x_{i,j} \, .
	\end{align*}
	Suppose~$v$ is a satisfying valuation. By definition, each~$C_i$ contains a positive (or negative) variable~$x$ which is made true by~$v$. When~$x$ is coded into the coded clauses inside the word~$u_P$, rule~(L4-2) (or rule~(L3-2) in the negative case) is applied, hence eliminating a~$T$-instance from the coded clause~$C_i$. Since this is true for every clause, the formula~$P$ can be generated by~(\ref{eq:coded}), as required.
\end{cproof}

$\NP$-hardness is now immediate, since the transformation
\[
	P \mapsto u_P
\]
is computable in linear time. Together with \Cref{clmm1}, this suffices.
Since the grammar~$G$ produces with every rule application at least one terminal, membership of words in~$L(G)$ is in~$\NP$; one can guess the derivation. Consequently, if the given grammar~$G$ in the statement of the theorem is assumed to produce with every rule application at least one terminal, then the problem is $\NP$-complete.
\end{proof}

Note that the bound $O(n^3)$ in the theorem follows from an \textit{a priori} estimate of the maximum number of clauses in a SAT formula on~$n$ variables without clause repetition.

\section{RE Languages}

We complete this paper by showing the strength of SSCGs with three non-terminals.

\begin{thm}\label{thm-relang}
Every RE language is generated by some SSCG with three non-terminals.  
\end{thm}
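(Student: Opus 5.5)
We will simulate a universal device that is known to characterise RE and whose control can be encoded with few symbols; the most convenient choice is a two-register (or, more robustly, a multi-counter) machine with output, or equivalently a grammar in Geffert normal form. We commit to Geffert normal form: every RE language $L$ is generated by a grammar with non-terminals $S_0,A,B,C,D$. Its context-free rules have the forms $S_0 \ra uS_0v$ and $S_0 \ra uv$ with $u\in\{A,C\}^*$ and $v\in(\{B,D\}\cup V_T)^*$. The only non-context-free rules are $AB\ra\emptyword$ and $CD\ra\emptyword$. Moreover, every derivation first produces, context-freely, a word $u\,v$ with $u\in\{A,C\}^*$ and $v\in\{B,D\}^*V_T^*$, and then cancels at the middle. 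The plan is to encode $S_0,A,B,C,D$ as short blocks over three non-terminals $S,T,U$, playing the role of the linear grammar codes $T^iST^i$ from \Cref{thm-linear}. For example, $A=TU$, $B=UT$, $C=TUU$, $D=UUT$, each followed or preceded by a marker, with $S$ standing for the middle symbol $S_0$.

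The first phase generates $u\,S\,v$ using presence/absence conditions only to guard phases. The rules $S \ra \langle u\rangle S\langle v\rangle$ and $S\ra\langle u\rangle S'\langle v\rangle$ are applied in the absence of a phase marker, where $S'$ is a code such as $SS$ that signals the end of phase one. Terminals of $v$ are emitted directly. Phase two implements the cancellation $AB\ra\emptyword$ and $CD\ra\emptyword$ at the unique middle position. The cancellation is done symbol by symbol: a non-terminal adjacent to the middle marker is erased only in the presence of a specific local window. For instance, we delete the last $U$ of $\langle A\rangle$ when $TU\,SS\,UT$ is present. Because every conditional string is a fixed finite window around the unique middle marker, each erasure step is determined. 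A sequence of intermediate marker configurations, such as $SS$ changing to $SSS$ and then back, ensures that a whole $A$-block and a whole $B$-block are removed before the next pair is started. A mismatch such as $A$ facing $D$ leaves a configuration in which no rule applies, so the derivation blocks. Finally, the middle marker is erased in the absence of $T$ and $U$, and termination occurs only when no $S,T,U$ remain.

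Verification will go in two directions. Soundness is the harder one: we must show that any word of terminals derived in the SSCG projects onto a successful Geffert derivation. This is done by an invariant stating that every reachable sentential form is of the form $\langle u\rangle\,\mathrm{mid}\,\langle v\rangle w$, with $\mathrm{mid}$ one of finitely many marker states and $\langle\cdot\rangle$ possibly having one partially erased block at the boundary. Completeness is routine, following the construction.

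The main obstacle is the soundness of phase two. Since only one nonterminal occurrence is rewritten per step, with an arbitrary choice of occurrence, we must rule out the following: a rule whose presence string matches at the middle being applied to some other occurrence of $S$, $T$ or $U$ far from the middle. I would handle this first by ensuring that the erasable letters occur in only one place. The middle marker contains the only $S$'s, and the erasure rules rewrite $S$-states rather than $T$ or $U$. Concretely, $S$ absorbs an adjacent $T$ or $U$ by rules like $S\ra S$ in a different state, which is infeasible since the rule cannot delete a neighbour. So instead I would let block letters be deleted only via rules $T\ra\emptyword$ and $U\ra\emptyword$ whose presence window forces adjacency to $SS$. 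I would also design the codes so that every window containing $SS$ occurs at most once, and so that erasing a distant $T$ produces a form that can never terminate. Checking this last property, that premature or misplaced deletions are fatal, is where the bulk of the case analysis lies.
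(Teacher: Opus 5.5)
\textbf{There is a genuine gap.} It sits exactly at the step you call the main obstacle, and neither remedy you sketch works.

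In an SSCG the condition is tested on the whole sentential form, and then \emph{any} occurrence of the left-hand side may be rewritten. A presence window such as $TU\,SS\,UT$ therefore cannot ``force adjacency to $SS$''. Applying $U\ra\emptyword$ to a distant $U$ leaves the licensing window untouched, so you can repeat it for every distant $U$. The same holds for every distant $T$ in a state that licenses $T$-deletion.

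Hence, starting from $\langle u\rangle SS\langle v\rangle x$ whose innermost pair happens to match, you can wipe out all other blocks, complete the single middle cancellation, erase the marker, and output $x$ even though $uv$ need not cancel. This breaks soundness:
\begin{itemize}
\item The words obtained this way form a context-free language. It contains every $x\in L$ that admits a Geffert derivation with nonempty $uv$.
\item The words of $L$ with empty $uv$ also form a context-free language.
\item So $L(G)=L$ would force $L$ to be context-free.
\end{itemize}

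No choice of block codes repairs this. Deleting whole blocks from a well-formed concatenation yields another well-formed concatenation. Your sample blocks even fail letterwise: erasing one $U$ from $\langle C\rangle=TUU$ leaves $\langle A\rangle=TU$. Nor can you attach the missing global check to the deletion rules, because an SSCG rule carries a presence \emph{or} an absence condition, never both.

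What is missing is twofold: a way to conjoin a global well-formedness test with the local control state, and a step protocol in which misplaced rewrites leave locally detectable damage. The paper supplies both.
\begin{itemize}
\item Its control states are codes over $\{T,U\}$ placed after $S$. They are built so that any one admissible corruption (turning $T$'s into $U$'s, deleting $U$'s, or replacing $U$'s by $TT$) never yields a code.
\item ``Message'' rules $S\ra SM$, conditioned on the presence or absence of a pattern, write the test result next to $S$. The next transition is a presence rule demanding $SMd2s_j$ or $SMd3s_j$, which forces the tests to happen after the marking phase has closed.
\item The data are counters kept as unary $T$-runs between terminals. The only intended mark is turning the leftmost $T$ of one run into $U$, and any other conversion is exposed by patterns such as $TTU$, $\Sigma TU$, $\Sigma UU$ or $U^8$.
\end{itemize}
A Geffert-based simulation might be salvaged with a comparable message-and-marking mechanism, but as written your phase two is unsound.

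Separately, once $SS$ has been reduced to a single $S$ at the end, your phase-one rules, guarded only by the absence of $SS$, become applicable again.
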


\begin{proof}
Let~$L$ be RE. We describe the construction of our SSCG~$G$ which has three non-terminals $S,T,U$.

\medskip

Consider a non-deterministic counter automaton~$A$ with~$r$ counters. We will interpret it as a language-generating device as follows. Based on the current state, the counter automaton can do one of the following:
\begin{enumerate}
	\item check the status of a particular counter (zero/non-zero), or 
	\item increment/decrement a counter, or 
	\item print a terminal symbol on the one-way write-only tape.
\end{enumerate}
The counter automaton~$A$ then transitions to the next state which in case~(1) above depends on the result of the check.

\medskip

In any derivation of~$u \in L$ in our SSCG~$G$, the sequence of sentential forms will mimic the simulation of the counter automaton. (Note that one-step simulations in~$A$ may need several steps in~$G$.)

\medskip

For ease of presentation, we will first describe the derivation for strings~$u \in L$ of length at least
\[
	1 + (r+1)(r+2)/2
\]
where the counter automaton has already printed the first~$(r+1)(r+2)/2$ characters of~$u$. We consider the more general case later.

We will use certain sequences of~$T,U$ in sentential forms as codes for 
\begin{itemize}
	\item states and dummy states (which are used for a sequence of steps in the derivation in~$G$ to implement one-step simulations of~$A$) and
	\item messages which code the presence or absence of certain patterns in the current sentential form.
\end{itemize}

Each code is of the form
\begin{align}
	\left( U^4T \Big( \big\{ U^4,U^5 \big\}T \Big)^3U^4TU^6T \right)^{r'}U^7 T \label{cc:code}
\end{align}
where~$r'$ is sufficiently large to code all messages, states, and dummy states.

We also require that
\begin{align}
	\left( U^4T \Big( \big\{ U^4,U^5 \big\}T \Big)^3U^4TU^6T \right)
\end{align}
has exactly one~$U^5$ in each of its~$r'$ many occurrences. Hence, all codes have the same length.

The following claim is proven by induction on the complexity of words.

\begin{claim}
	Suppose~$c$ is a code as in (\ref{cc:code}). Let~$c'$ be the result of applying exactly one of the following changes to~$c$:
	\begin{itemize}
		\item A finite number of~$T$-instances are changed into $U$-instances.
		\item A finite number of~$U$-instances are changed into~$\emptyword$-instances.
		\item A finite number of~$U$-instances are changed into~$TT$-instances.
	\end{itemize}
	Then~$c'$ is not a code. 
\end{claim}

We now work in~$G$.

\bigskip

At the start of any simulation of a step of the counter automaton~$A$ (except at initialisation time), the sentential form in~$G$ will be of the following form:
\begin{align}\label{eq:codeform}
	T\Sigma^1 T^{\ell+C_1} \Sigma^2 T^{\ell+C_2} \cdots \Sigma^{r} T^{\ell+C_r} \Sigma^{r+1}\Sigma^*S (\textit{code})^+
\end{align}
where:
\begin{align*}
	C_i &= \text{value of the $i$-th counter}\\
	\ell &= \text{sufficiently large prefixed number so that interferences with anything}\\
	&\hspace*{2.25em} \text{appearing after~$S$ is not possible in the descriptions below}\\
	\textit{code} &= \text{sequence of states, the first of which corresponds to state of~$A$ being simulated}
\end{align*}

This definition is recursive. The first~$\textit{code}$ after~$S$ denotes the current state corresponding to the state of the counter automaton being simulated. The codes appearing after the first~$\textit{code}$ can be ignored, as they only describe the history of the simulation.

Any serious violations in the above format, due to SSCG rules being employed in an unintended manner, will be caught separately (cf.\ \Cref{sec:formatErrorChecking}).

\medskip

Intuitively, one can think of the counter values and the current state as instantaneous descriptions of the counter automaton, with the string formed by using the characters in~$\Sigma$ in the sentential form as already output by the counter automaton.

\medskip

We now describe how to do the steps in the simulation. To do this, we describe the message generation rules, and show how to implement counter automaton instructions.

\subsection{Rules for message generation}
Below, we give the rules which govern the generation of messages in our SSCG. Messages may be generated at any step of the derivation, though they will only be relevant based on the instruction being processed. If irrelevant messages are generated, then they will be caught in the error checking below (see \Cref{sec:formatErrorChecking}). A message~$M$ is always placed immediately after~$S$ in the sentential form. 

\begin{center}
\setlength{\extrarowheight}{3pt}
\begin{tabular}{rc|c|c}
& Rule & Condition on Subword & Explanation \\
\hline
\rowgap{4}\\
\hline
(M1) & $S \ra SM$ & $T \Sigma^i T^{\ell + 1}$ present & $M$ is ``counter~$i$ is non-zero''\\
(M2) & $S \ra SM$ & $T \Sigma^i T^{\ell + 1}$ absent & $M$ is ``counter~$i$ is zero''\\
(M3) & $S \ra SM$ & $T \Sigma^i UT^{\ell - 1}$ present & $M$ is ``counter~$i$ is being updated''\\
(M4) & $S \ra SM$ & $T \Sigma^i UT^{\ell - 1}$ absent & $M$ is ``counter~$i$ is not being updated''\\
\end{tabular}
\end{center}

We also include the following collection of rules, denoted by~(M5), which handles format and update errors:

\begin{quote}
\hspace*{-1cm}(M5) If~$u$ is any of
\[
	U\Sigma \hspace*{1em} UUTT \hspace*{1em} TU^iT \text{ (for~$0 < i < 4$)} \hspace*{1em} \Sigma UU \hspace*{1em} TTU \hspace*{1em} \Sigma TU \hspace*{1em} \Sigma UTU \hspace*{1em} T \Sigma T \hspace*{1em} U^8
\]
and~$u$ is present/absent, then~$S \ra SM$ where
\[
	M \text{ is ``$u$ is present''/``$u$ is absent''}.
\]
\end{quote}

The role of~(M5) is elucidated when we handle all kinds of sanity checks in \Cref{sec:formatErrorChecking}.

\subsection{Processing of a counter automaton instruction.}

Each counter instruction in~$A$ is processed as follows. We specify the messages which are generated, and the sequence of steps the simulation takes to carry
out the instruction. Further, we explain the process of checking whether the generated messages correspond exactly to the instructions which are being executed.

\medskip

For ease of presentation, we identify codes for (dummy) states/messages with their state/messages.
Suppose the current state is~$s_j$. The state names
\[
	d \, i \, s_j
\]
where~$i \in \mathbb{N}$ are dummy states related to the state~$s_j$. These are required to carry out counter automaton instructions which need several steps in our SSCG.

\medskip

Below, we give the rules of the grammar interspersed with explanations of how the instructions are implemented.

\begin{itemize}
\item[(A)] Checking the counter value of counter~$i$ to be zero/non-zero in any state:
\end{itemize}

\begin{center}
\begin{tabular}{Q{1cm}P{2cm}|P{4.5cm}|P{6cm}}
& Rule & Condition on Subword & Explanation \\
\hline
\rowgap{4}\\
\hline
\multirow{2}*{(1)} & \multirow{2}*{$S \ra Ss_k$} & \multirow{2}*{$SMs_j$ present} & $s_k$ is code for next state\\
&&& $M$ is ``counter~$i$ is zero/non-zero''\\
\end{tabular}
\end{center}

\bigskip

\begin{itemize}
\item[(B)] Printing of a character in a state:
\end{itemize}

\begin{center}
\begin{tabular}{Q{1cm}P{2cm}|P{4.5cm}|P{6cm}}
& Rule & Condition on Subword & Explanation \\
\hline
\rowgap{4}\\
\hline
\multirow{2}*{(2)} & \multirow{2}*{$S \ra aSs_k$} & \multirow{2}*{$Ss_j$ present} & $s_k$ is code for next state\\
&&& $a$ is char being printed
\end{tabular}
\end{center}

\bigskip

\begin{itemize}
\item[(C)] Incrementing/decrementing the $i$-th counter in state~$s_j$:
\end{itemize}

\smallskip

The rules below use the structure of codes as defined in~(\ref{eq:codeform}) in an essential way. Similarly, dummy states, which are needed to simulate the one-step-transition in the counter automaton~$A$, play an important role here. 

\smallskip

\begin{center}
\begin{tabular}{Q{1cm}P{2cm}|P{4.5cm}|P{6cm}}
& Rule & Condition on Subword & Explanation \\
\hline
\rowgap{4}\\
\hline
(3) & $S \ra S d 1 s_j$ & $Ss_j$ present &\\[3mm]
\multirow{2}*{(4)} & \multirow{2}*{$T \ra U$} & \multirow{2}*{$S d 1 s_j$ present} & change first~$T$ in counter\\
&&& which is to be updated\\[3mm]
(5) & $S \ra S d 2 s_j$ & $S d 1 s_j$ present &\\[3mm]
\multirow{4}*{(6)} & \multirow{4}*{$S \ra Sd3s_j$} & \multirow{4}*{$SMd2s_j$ present} & $M$ is sequence of messages:\\
&&& ``counter~$i$ is updated,\\
&&& none other are updated''\\
&&& and no formatting errors\\[3mm]
(7) & $U \ra \emptyword | TT$ & $Sd3s_j$ present & decrement/increment counter\\[3mm]
\multirow{3}*{(8)} & \multirow{3}*{$S \ra Sd4s_j$} & \multirow{3}*{$SMd3s_j$ present} & $M$ is sequence of messages:\\
&&& ``no counter is updated''\\
&&& and no formatting errors\\[3mm]
(9) & $S \ra S s_k$ & $S d 4 s_j$ present & $s_k$ is code for next state
\end{tabular}
\end{center}

\bigskip

\begin{itemize}
\item[(D)] If~$s_j$ is a halting state:
\end{itemize}

\begin{center}
\begin{tabular}{Q{1cm}P{2cm}|P{4.5cm}|P{6cm}}
& Rule & Condition on Subword & Explanation \\
\hline
\rowgap{4}\\
\hline
(10) & $S \ra \emptyword$ & $Ss_j$ present &\\
\end{tabular}
\end{center}

\bigskip

\begin{itemize}
\item[(E)] Termination rules:
\end{itemize}

\begin{center}
\begin{tabular}{Q{1cm}P{2cm}|P{4.5cm}|P{6cm}}
& Rule & Condition on Subword & Explanation \\
\hline
\rowgap{4}\\
\hline
\multirow{2}*{(11)} & $T \ra \emptyword$ & \multirow{2}*{$S$ absent} &\\
& $U \ra \emptyword$ &
\end{tabular}
\end{center}

Note that while~$S$ is present, only the rules in~(C) can update~$U$ and~$T$ in some dummy states. We explain how to check whether that is done consistently in the following section.

\subsection{Checking Format Errors}\label{sec:formatErrorChecking}

\begin{enumerate}[label = (\roman*)]
\item In states where $T$ is converted to $U$:

\begin{itemize}
\item Check that
\[
	U\Sigma \quad \Sigma UU \quad TTU \quad \Sigma TU \quad \Sigma UTU
\]
are not present (this ensures that to the left of~$S$ the only~$T$ converted to~$U$ is the leftmost~$T$ in any counter).
\item Check that~$U^8$ is not present (this ensures that to the right of~$S$ no~$T$ is converted to~$U$). 
  \item Check that there is a~$T\Sigma T$ in the sentential form (this ensures that the first~$T$ to the left of~$S$ is preserved).
 \end{itemize}

\item In states where~$U$ is converted to~$\emptyword$, check that
the codes to the right of $S$ make sense.
No
\[
	UTT \quad TU^iT \text{ (for } 0<i<4 \text{)} \quad T(U^4T)^5 \quad S(U^4T)^5
\]
nor a sequence of six~$TU^{<6}T$ without a~$TU^6T$~/~$TU^7T$ in between, is present.\newline Each adjacent pair of~$TU^6T$, without an~$U^7T$ in between, is separated by five copies of~$TU^{<6}T$.\newline Each pair of adjacent~$TU^7T$ is separated by~$6r'$-many occurrences of~$TU^{<7}T$.

\item When~$U$ is converted to~$TT$, check that there is no~$UTT$.
\end{enumerate}

The above leaves the possibility that the first state codes or messages which appear after~$S$, but before the first state code, are void. However, in that case no progress can be made:~$S$ can never be removed (none of the rules above apply, and any use of reinitialisation below will fail).

\subsection{Initialisation}

We now describe how smaller strings are generated, and how the initial configuration (after printing a string of length $(r+1)(r+2)/2$ by the counter automaton) is obtained.

\begin{center}
\begin{tabular}{Q{1cm}P{2cm}|P{4.5cm}|P{6cm}}
& Rule & Condition on Subword & Comments \\
\hline
\rowgap{4}\\
\hline
\multirow{2}*{(I1)} & \multirow{2}*{$S \ra u$} & \multirow{2}*{$T$ absent} & to generate words~$u$\\
&&& of length~$\le (r+1)(r+2)/2$\\[3mm]
\multirow{3}*{(I2)} & \multirow{3}*{$S \ra v$} & \multirow{3}*{$T$ absent} & $v$ is of the form:\\
&&& $T\Sigma T^{\ell + C_1}T \dots T^{\ell + C_r}\Sigma^{r+1}S C$\\
&&& where $C$ is the start state
\end{tabular}
\end{center}

We use~(I2) to generate strings of length exceeding~$(r+1)(r+2)/2$. Further, we assume in the above that the counter automaton is modified so that, initially, it guesses a string of length~$(r+1)(r+2)/2+1$, and then goes to a particular counter state~$C$ before progressing.

Note that there are always at least two copies of~$T$ to the right of~$S$ when a rule to convert~$T$ to~$U$ is used. Thus, no initialisation rules can be applied after the first initialisation.

Further, if we use a universal counter automaton, then the number of counters is fixed, and we can assign the initial counter values based on the program run by the universal counter automaton. This ensures that in the above construction, the length of the side conditions are in fact independent of the RE language.

\medskip

By the above argument, our SSCG~$G$ simulates the counter automaton~$A$ which we fixed at the beginning of this proof. By the rules of our SSCG, it can be readily verified that~$G$ generates exactly the counter automaton's language~$L$. Hence, the proof is complete. \qedhere

\end{proof}

\begin{rem}[Side condition sizes independent of alphabet size]
In some situations, researchers may wish the condition sizes to be independent of the alphabet size. This only affects the printing rules in the above construction, and, correspondingly, the code size may increase. To handle this, one can proceed as follows.

\medskip

Firstly, assume that the program simulated by a universal counter automaton intends to write the symbol~$a_k$. Depending on this symbol, it guesses whether a pre-determined counter (we use counter~$r+1$ for this purpose) to have value~$k$. in order to verify that the printing was completed successfully, the counter automaton counts down to~$0$ in counter~$r + 1$ using rules~(3) to~(6) above. If the counter didn't have value~$k$, then the automaton never halts. Otherwise, the automaton halts, and it is confirmed that the printing was completed successfully.

\medskip

To simulate the above step, we add the following rules. Below, let~$d$ denote a printing state, and~$d', d''$ dummy printing states.

\begin{center}
\begin{tabular}{Q{1cm}P{2.5cm}|P{4cm}|P{6cm}}
& Rule & Condition on Subword & Comments \\
\hline
\rowgap{4}\\
\hline
(12) & $S \ra a_k T^k S d'$ & $\Sigma S \, d$ present & exists for each~$a_k$ in alphabet\\[3mm]
&\multirow{2}*{$S \ra SM$} & $\Sigma^{r+1}\Sigma T$ present/absent & \multirow{2}*{$M$ codes presence/absence}\\[9mm]
(13) & $S \ra S \, d''$ &$SM \, d'$ present & $M$ codes presence of~$\Sigma^{r+1} \Sigma T$
\end{tabular}
\end{center}

\smallskip

In Rule~(12) above, we assume the alphabet is of the form~$\lbrace a_2,\dots,a_h \rbrace$ for some~$h \in \mathbb{N}$, in order to avoid trivial counter values.

After the above, the universal counter machine proceeds with the simulation, assuming the guess of the counter is complete.

\medskip

Now, the rest of the construction is similar to earlier, except that the~($r+1$)-th counter is considered~$0$ if the sentential form contains a substring of the form~$\Sigma^{r+2} S$. If instead the sentential form contains a substring of the form~$\Sigma^{r+2} T$, then it is non-zero.

\begin{question}
	Can every regular language over a binary alphabet be generated by an SSCG with two non-terminals?
\end{question}

\end{rem}

\bibliographystyle{alpha}
\bibliography{masterbib.bib}

\end{document}